\documentclass{amsart}

	\usepackage{amssymb, amsmath, amsthm}
\numberwithin{equation}{section}
\usepackage{cite}                   
\usepackage{paralist}               
\usepackage[margin=1in]{geometry}
\usepackage{appendix}    
\usepackage[final]{graphicx}
\usepackage{enumitem}
\usepackage{subcaption} 
\usepackage{float}
\usepackage{mathrsfs}
\usepackage{array}\usepackage[bookmarks,bookmarksnumbered,bookmarksopen,breaklinks,linktocpage]{hyperref}
\usepackage{nomencl}

\newtheorem{Proposition}{Proposition}[section]
\newtheorem{Lemma}{Lemma}[section]

\theoremstyle{definition}           

\theoremstyle{remark}

\begin{document}

\title[Selection of Saffman-Taylor fingers in a tapered Hele-Shaw cel]{On the selection of Saffman-Taylor fingers in a tapered Hele-Shaw cell}

\author{Dipa Ghosh}
\address{Department of Mathematics, Indian Institute of Technology Guwahati, Guwahati - 781039, Assam, India}
\email{dipamath@iitg.ac.in}

\author{Satyajit Pramanik}
\address{Department of Mathematics, Indian Institute of Technology Guwahati, Guwahati - 781039, Assam, India}
\email{satyajitp@iitg.ac.in (Corresponding author)}

\subjclass[2020]{ Primary: 34E20; 76D27; 34E10.}



\keywords{Saffman-Taylor fingers,  selection mechanism, asymptotic analysis, WKB theory, solvability, tapered Hele-Shaw cell, conformal mapping}

\begin{abstract}
We present an analytical study for predicting the finger width of the Saffman-Taylor finger in a tapered Hele-Shaw cell. We consider a rectilinear geometry with a constant depth gradient and apply analytical techniques of singular perturbation analysis and WKB approximation to derive an expression for the finger selection mechanism for such tapered Hele-Shaw cells with small depth gradients. We establish 
\[ \Lambda - \frac{1}{2} \sim f(\alpha) Ca_m^{2/3} \quad \mbox{as} \quad Ca_m \rightarrow 0, \;\;\; \mbox{and} \;\;\;  \lvert \alpha \rvert \ll 1.\]
Here, $\Lambda$ is the dimensionless finger width, $Ca_m$ denotes the modified Capillary parameter, and $f(\alpha)$ is a linear function of the gap gradient $\alpha$, such that $f(\alpha = 0) = 1$ recovering the results of parallel Hele-Shaw cell (Hong and Langer \cite{hong1986analytic}, Combescot \emph{et al.} \cite{Combescot1986}, Shraiman \cite{shraiman1986velocity}). Our findings indicate that the Hele-Shaw cell gap gradient plays a crucial role in determining $\Lambda$, allowing for control over fingering instabilities such that the single-finger steady state can be stabilised or destabilised depending on the sign of the gradient, compared to the standard Hele-Shaw cell. The theoretical estimates reveal excellent agreement with experimental finger-width data and predictions from linear stability analyses. 
\end{abstract}

\maketitle

\section{Introduction}


Viscous fingering (VF) instability is a fundamental interfacial instability in hydrodynamic systems and represents a canonical mechanism for spontaneous pattern formation in viscous flows \cite{hill1952channeling,saffman1958penetration,chuoke1959instability}. It has been extensively investigated within the framework of interfacial hydrodynamics \cite{hill1952channeling, homsy1987viscous}, nonlinear stability theory \cite{alvarez2004nonlinear}, and pattern selection \cite{mclean1981effect, hong1986analytic, Combescot1986, hong1987pattern} across multiple disciplines. The instability was first demonstrated experimentally in 1958 by Saffman and Taylor \cite{saffman1958penetration} in a rectilinear Hele–Shaw cell, composed of two parallel glass plates, separated by a narrow uniform gap. Subsequently, the problem has been extensively studied in radial Hele–Shaw geometries \cite{thome1989saffman}. In the radial configuration, Hele-Shaw flows take place when the less viscous fluid is injected at a constant injection rate at the centre of the cell, and drives the more viscous fluid radially outwards. The gap-averaged velocity in a Hele–Shaw cell \cite{saffman1986viscous} is mathematically equivalent to a two-dimensional viscous flow through a porous medium and obeys Darcy’s law; hence, it serves as a convenient laboratory analogue for porous media displacements \cite{hill1952channeling,saffman1958penetration,chuoke1959instability,mclean1981effect, homsy1987viscous}. Since then, this phenomenon—commonly referred to as the Saffman–Taylor instability or pore-scale viscous fingering—has formed the basis of a vast body of theoretical, numerical, and experimental research on interfacial instabilities. 

In the classical Saffman–Taylor experiment \cite{saffman1958penetration}, a low-viscosity Newtonian fluid (e.g., water) is injected into a cell initially filled with a more viscous Newtonian fluid (e.g., oil). Under these conditions, the initially planar interface becomes hydrodynamically unstable and develops finger-like protrusions of the invading phase into the displaced one, leading to the breakdown of the initially straight interface. At sufficiently high velocities, a single finger emerges and is stabilized by surface tension, occupying nearly half of the channel width ($W$)  \cite{saffman1958penetration}, a result later supported by the theoretical analysis of Pitts \cite{pitts1980penetration} and many others \cite{tabeling1986film,tabeling1987experimental}. However, their analytic solution without surface tension does not restrict the value of $\Lambda$ \cite{mclean1981effect}. The significance of the surface tension to the shape selection procedure was analytically highlighted much later by Hong and Langer in 1986 \cite{hong1986analytic} and similar parallel works by others \cite{Combescot1986,shraiman1986velocity,tanveer1986effect,bensimon1986viscous}, who showed that the surface tension appeared in the problem as a singular perturbative term leading to a solvability condition at the fingertip, thereby isolating a particular value from the continuum of solutions. However, the analytical solution was derived for Newtonian-Newtonian fluid displacements. In this regard, Bansal et al. \cite{bansal2023selection} recently derived an analytical expression for the non-Newtonian finger profile in a parallel rectangular Hele–Shaw cell, capturing both finger-thinning and finger-widening regimes. 
\subsection{Applications and importance}\label{subsec:AI}
Although viscous fingering is a paradigmatic instability in pattern formation, it arises widely in industrial and geological processes, such as enhanced oil recovery (EOR) \cite{gorell1983theory ,lake1988enhanced, ali1996promise, pinilla2021experimental, vishnudas2017comprehensive} and CO$_2$ geological sequestration \cite{huppert2014fluid}, where it is often undesirable. In EOR, water is injected into porous oil reservoirs to displace oil; however, due to the Saffman–Taylor instability, fast-growing water fingers can reach the production well prematurely, resulting in the extraction of mostly water rather than oil and a severe reduction in recovery efficiency. This makes viscous fingering a serious morphogenetic nuisance and a major obstacle in petroleum engineering. Given the growing global demand for energy, therefore, improving oil recovery from porous reservoirs has become increasingly important \cite{al2018numerical, ali1996promise, pinilla2021experimental, talebian2014foam, daripa2004optimal, daripa2012numerical}. Because of situations like this, much attention has been devoted to devising strategies for controlling, and eventually stabilizing, the growth of such patterns in immiscible \cite{gorell1983theory ,homsy1987viscous,pasa1996existence} and miscible displacements \cite{tan1988simulation,fernandez2002density,pramanik2015effect} in porous media.
\subsection{Control of Viscous Fingering Instabilities}\label{subsec:CII}
Many studies have shown that uncontrolled viscous fingering can severely compromise the efficiency of oil recovery operations, thereby motivating the development of strategies to mitigate such instabilities. One such approach is stabilization through the introduction of a depth gradient in confined geometries \cite{maxworthy2002effect}. Early investigations in the context of thin films between rollers and spreaders examined the influence of depth gradients on the Saffman–Taylor instability \cite{pearson1960instability, pitts1961flow}. More recently, this problem has attracted renewed attention owing to the close analogy between flow in a Hele–Shaw cell and flow through porous media \cite{homsy1987viscous, zimmerman1996hydraulic, bear2013dynamics}. In this context, Al-Housseiny et al. \cite{al2012control, al2013controlling} reinterpreted the linear stability analysis in the presence of a small gap gradient—i.e., in an angled Hele–Shaw cell—as a viable stabilization and control mechanism. Earlier experimental work by Zhao et al. \cite{zhao1992perturbing} demonstrated that Saffman–Taylor fingers can be either stabilized or destabilized depending on the sign of the gap gradient, while Dias and Miranda \cite{dias2010finger} extended linear stability theory to predict finger tip-splitting in the presence of variations in the flow passage.

Subsequently, a large body of work has investigated interfacial instabilities in a variety of non-standard Hele–Shaw configurations and flow geometries. In tapered and angled cells, where the plate spacing varies along the flow direction, depth gradients modify the base flow and the linear growth rates of perturbations, providing a powerful mechanism to stabilize or destabilize viscous fingering \cite{dias2010finger,al2012control, al2013controlling, dias2013taper, lu2020computational}. In angled Hele-Shaw cells, theoretical and numerical study on the interfacial instability has been done across a range of capillary numbers ($Ca$) (\cite{lu2020computational} and reference therein). In elastic or compliant Hele–Shaw cells, deformation of one or both plates leads to a dynamically evolving gap, introducing fluid–structure interaction that strongly influences finger morphology and stability \cite{al2013two, pihler2012suppression, pihler2013modelling}. Numerous studies have used exponential asymptotic to derive selection mechanism \cite{chapman1999role, lustri2020selection, chapman2023role} and also for the case of divergent flow within a wedge shaped Hele-Shaw cell \cite{andersen2024selection, andersen2025computations}.  Wedge-shaped and divergent geometries, in which the flow cross-section increases downstream, generate spatially varying velocities and pressure gradients that affect finger selection and tip-splitting dynamics \cite{amar1991viscous, andersen2024selection, andersen2025computations}. Heterogeneous or structured Hele–Shaw cells, designed to mimic porous-media disorder, introduce spatial variations in permeability or gap thickness, leading to complex invasion pathways and non-classical fingering patterns \cite{hu2016numerical, jackson2017stability, bongrand2018manipulation}. In a lifting Hele-Shaw flow for Newtonian (due to inertia) \cite{anjos2017inertia} as well as non-Newtonian fluids \cite{kabiraj2003finger} in a time-dependent manner is also analysed. Finally, time-dependent and dynamically controlled geometries enable active modulation of the gap or forcing conditions, offering additional routes for controlling the growth and evolution of interfacial instabilities \cite{morrow2019numerical, morrow2021review}. 
Unlike elastic boundaries that dynamically reshape the flow passage and affect stability, external electric fields provide an additional body force that alters interfacial stresses and electrokinetic transport, enabling active control of finger evolution \cite{mirzadeh2017electrokinetic}. Time-dependent injection rates have been shown to regulate the amplification of unstable modes by modulating the transient base state \cite{dias2012minimization, gin2015stability, gin2021time}, and multilayer flow configurations can suppress fingering by distributing viscosity contrasts across multiple interfaces \cite{daripa2012universal}. Material-based strategies include tailoring permeability through porous-medium structure \cite{tan1992viscous}, adjusting the viscosity ratio between fluids \cite{anjos2017radial}, and introducing suspended particles to modify effective rheology and interfacial mobility \cite{xu2016particle, hooshanginejad2019stability}. Additional control can be achieved by applying external forces via rotation or magnetic fields, which modify the pressure and force balance at the interface \cite{jackson2003controlling, alvarez2003systematic,miranda2005viscosity, anjos2018finger}, as well as by targeted control of miscible fingering in radial geometries \cite{chen2010controlling, sharma2020control}. Collectively, these approaches demonstrate that geometric, dynamical, and material modifications can profoundly alter the underlying spectral problem governing the stability of the fluid–fluid interface and the growth of interfacial disturbances. 

\subsection{Goals and outline of this work}
In the present work, we connect, for the first time, the classical solvability theory of viscous fingering with the geometric control paradigm of tapered Hele–Shaw cells. While solvability and WKB analyses have successfully explained finger selection in parallel cells \cite{hong1986analytic, corvera1995anisotropic, bansal2023selection}, adding depth gradients have been shown to influence linear stability and finger dynamics strongly \cite{zhao1992perturbing, dias2010finger,al2012control, al2013controlling}. These two frameworks have remained theoretically disconnected. Motivated by these observations, we seek to establish a unified theoretical description that incorporates geometric tapering into the solvability framework. Here, we develop a systematic solvability formulation for Saffman–Taylor fingering in a tapered geometry, demonstrating how a depth gradient enters the selection problem as a new control parameter. We show that even weak tapering breaks the degeneracy of the classical solutions and shifts the selected finger width through a modified capillary–geometry balance, thereby enabling geometric control of nonlinear pattern selection.

This study focuses on two main objectives. First, it provides a clear and systematic derivation of the solvability condition governing pattern selection in the Saffman–Taylor problem within a tapered Hele–Shaw cell, following the framework developed in earlier works \cite{corvera1995anisotropic, bansal2023selection}, \cite{zhao1992perturbing,al2012control,al2013controlling}. Second, it demonstrates that the selected finger width $\Lambda$, which depends on the nondimensional depth gradient and the modified capillary parameter $Ca_m$, is sensitive to the imposed depth variation. A slight departure from parallel geometry results in a noticeable shift in finger selection, consistent with experimental findings and linear stability analysis \cite{al2012control, al2013controlling, zhao1992perturbing}. 

The paper is organised as follows. In Section \ref{sec:PF}, we derive the viscous fingering equations in the physical plane and transform them into the potential and conformal planes, leading to a linear inhomogeneous integro-differential equation. This equation is then analysed in Section \ref{sec:Solvability} using the solvability theory and the WKB approximation to obtain the main analytical results. Section \ref{sec:discussion} presents and validates these results through comparisons with in silico simulations and in vitro experiments reported in the literature, followed by the concluding remarks. 

\section{\label{sec:PF}Problem formulation}

A conventional Hele–Shaw cell consists of two parallel plates separated by a small gap $h_0$ and filled with a viscous fluid $F_1$ with viscosity, $\mu$. In a rectilinear configuration, this immiscible fluid $F_1$ is displaced (at constant velocity) by an invading, immiscible fluid $F_2$, with surface tension $\nu$ acting at the interface.

\begin{figure}[!htbp]
    \centering
    \includegraphics[width=\textwidth]{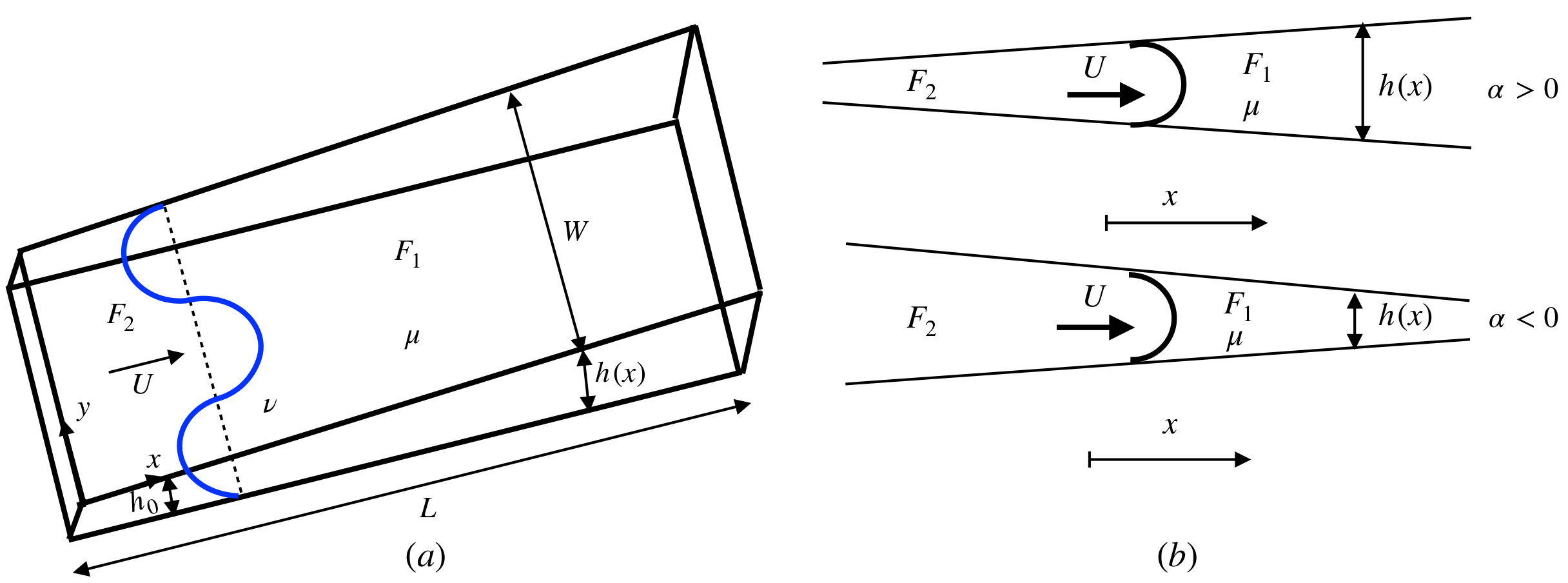}
    \caption{(a) A schematic illustrating the top view of the fluid displacement in a rectilinear tapered Hele-Shaw cell. (b) Schematics of the side view of the cell illustrating a positive and a negative depth gradient along the flow direction.}
    \label{fig:1}
\end{figure}
\vspace{-0.5cm}

In the present study, we examine the displacement of one fluid by another in a Hele–Shaw cell where the plates are not parallel; instead, they are separated by a slight depth gradient $\alpha$, as illustrated in Fig. \ref{fig:1}$(a)$. Because the plates are nearly horizontal $(\lvert \alpha \rvert \ll 1)$, gravitational effects can be ignored relative to viscous forces. The cell has a depth profile  $h(x) = h_0 + \alpha x$, a width $W$, and a length $L$, such that $L \gg W  \gg  h(x)$, representing either a slowly diverging $\alpha > 0$ or converging $(\alpha < 0)$ geometry along the flow direction (see Fig. \ref{fig:1}$(b)$). In this study, we seek to identify the effect of a small depth gradient $\lvert \alpha \rvert \ll 1$ on the stability of a uniform interface, which is located at $x=0$ and is advancing in the Hele-Shaw cell at a velocity $U$.
\subsection{Derivation of the viscous fingering equations in the physical plane} \label{subsec:VFE}
\begin{figure}[h]
    \centering
    \includegraphics[width=0.7\textwidth]{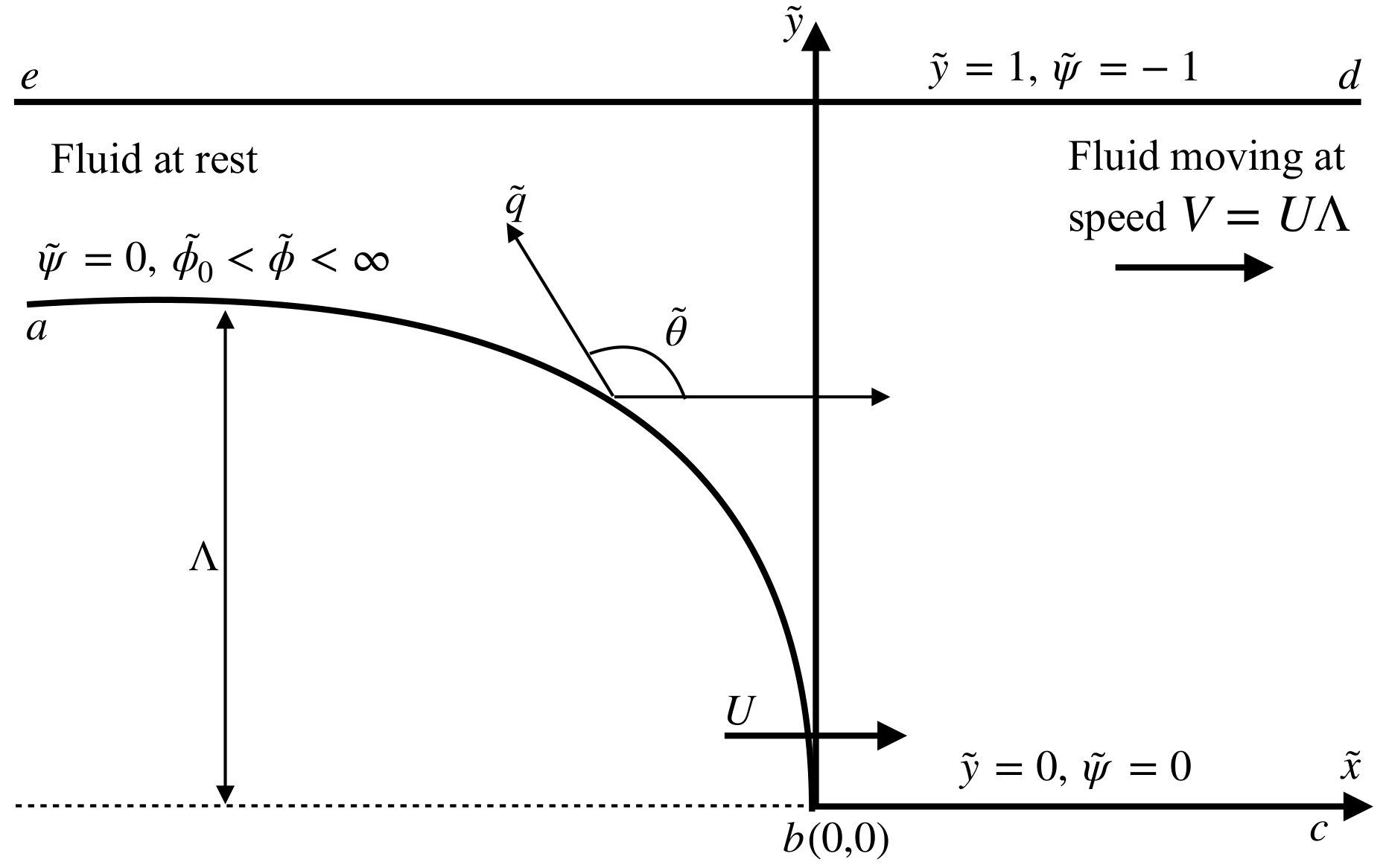}
    \caption{A schematic diagram of a Saffman-Taylor finger in a laboratory frame, assumed symmetric about the $\tilde{x}-$axis (Top view)}
    \label{fig:2}
\end{figure}
\vspace{0.5cm}
Since $L \gg W  \gg  h(x)$, the slow flow of an incompressible fluid in a homogeneous porous medium is governed by two-dimensional Darcy's law and the continuity equation modified to account for the depth gradient
\begin{subequations}
  \begin{align}
     \Vec{u} = -\frac{h^2(x)}{12\mu}\Vec{{\nabla}} p ,
      \label{eqn:1a} \\
    \Vec{\nabla} \cdot (h(x)\Vec{u})=0 ,
     \label{eqn:1b} 
  \end{align}
  \label{eqn:1} 
\end{subequations}
where $p$ and $\Vec{u}=(u,v)$ are, respectively, the depth-averaged pressure and velocity vector. Substituting Eq. \eqref{eqn:1a} into \eqref{eqn:1b}, the flows in the two fluids are described by 
\begin{align}
    \Vec{\nabla}^2 p +\frac{3\alpha}{h(x)}\frac{\partial p}{\partial x} = 0.
    \label{eqn:2}
\end{align}
In the limit where $\lvert \alpha \rvert \ll 1$, the terms of $\mathbf{O}(\alpha^2)$ can be neglected and thus Eq. \eqref{eqn:2} becomes a PDE with constant coefficients,
\begin{align}
    \frac{\partial^2 p}{\partial x^2} + \frac{\partial^2 p}{\partial y^2} + \frac{3\alpha}{h_0}\frac{\partial p}{\partial x} = 0.
    \label{eqn:3}
\end{align}
The interface conditions are described in terms of the continuity of the velocity and the pressure jump across the interface. The former requires that the normal velocity of the two fluids is equal at the interface, 
\begin{align}
    \Vec{u}_{F_1} \cdot \boldsymbol{n}_1  = \Vec{u}_{F_2} \cdot \boldsymbol{n}_2,
    \label{eqn:4}
\end{align}
where $\boldsymbol{n}_1$ and $\boldsymbol{n}_2$ denote the outward unit normal at the interface in the $F_1$ and $F_2$, respectively \cite{corvera1995steady, corvera1995saffman}. In our problem, we introduce surface tension, as it provides a natural length scale and bounds the wavelength of perturbations that could make the interface unstable. Therefore, the second condition is the Gibbs–Thomson condition, which states that the pressure at the interface decreases in proportion to the local curvature, i.e., 
\begin{align}
    -p = \nu \kappa + \frac{2\nu}{h},
    \label{eqn:5}
\end{align}
where $\kappa = 1/R$ is the lateral curvature, where $R$ is the radius of curvature. The second part of the dynamic boundary condition comes from the curvature ($1/h$) due to the depth of the HS cell. It should be noted that in our case, the fluid $F_1$ is perfectly wetting. Assuming $x-$direction as the direction of the flow, we have $\displaystyle \frac{\partial p}{\partial y} = 0$ at the side walls of the channel $y = \pm w$, where $w$ is half of the channel width $W$ ($w = W/2$). This is because the viscous fluid cannot penetrate the rigid walls forming the channel. Far away from the finger, the flow becomes uniform and unidirectional along the length of the cell, i.e., $\Vec{u} = (U, 0)$, and $\displaystyle \left. \frac{\partial p}{\partial x}\right\rvert_{x \rightarrow \infty} = -\frac{12\mu}{h^2} U$.
Eqs. \eqref{eqn:1a} -- \eqref{eqn:3} together with the boundary conditions \eqref{eqn:4} and \eqref{eqn:5} constitute the equations of motion of the problem. 

\subsection{Transformation of the tapered cell into an infinite strip} \label{subsec:conformal}

\begin{figure}[h]
    \centering
    \includegraphics[width=\textwidth]{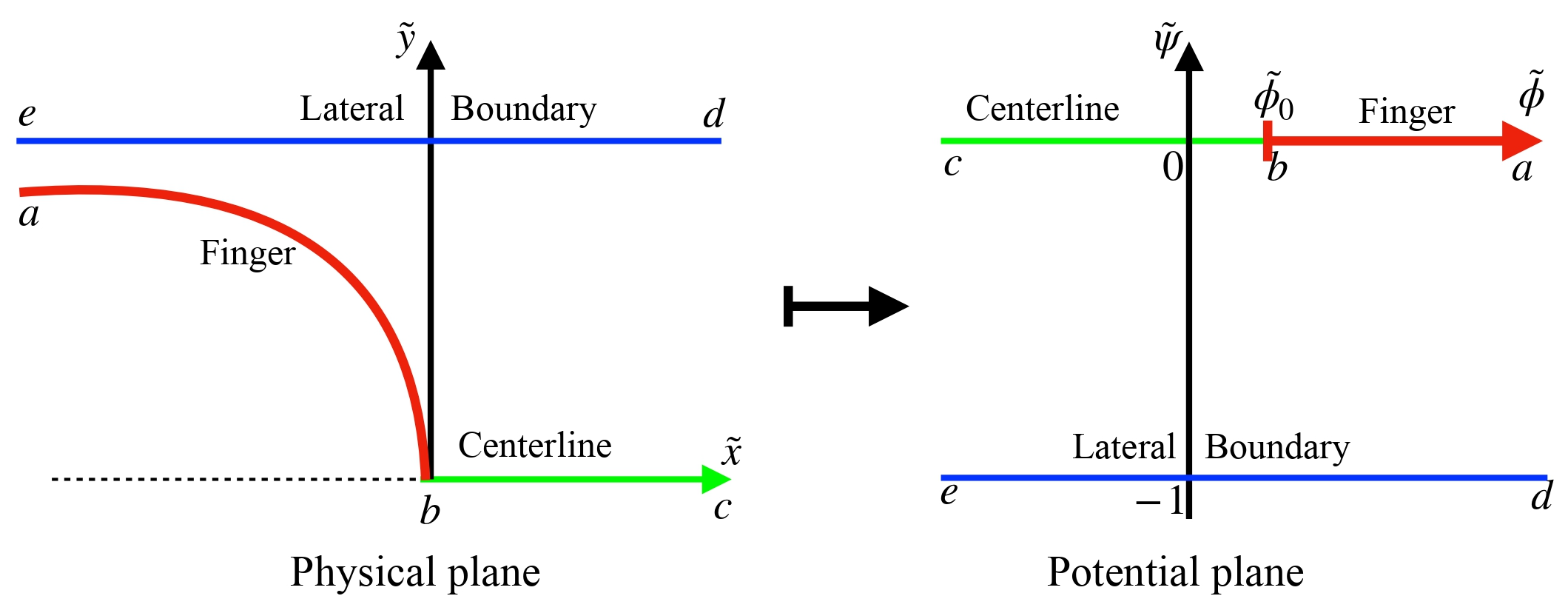}
    \caption{A schematic diagram of a Saffman-Taylor finger from the physical plane to the potential plane}
    \label{fig:3}
\end{figure}
We define $\displaystyle \phi =- \frac{h^3}{12\mu}p$, where $p = p_{F_1} - p_{F_2}$ such that 
\begin{align}
    \Vec{u} =\frac{\Vec{\nabla} \phi}{h}.
    \label{eqn:6}
\end{align}
The scalar function $\phi$ can be interpreted as a velocity potential that satisfies Darcy's law \eqref{eqn:1a} in the leading order. The deviation from Darcy's law is $\mathbf{O}(\alpha)$. Here, we discuss finger selection in a tapered HS cell due to this geometric perturbation $\alpha$. 
Since the flow is two-dimensional and obeys Laplace's equation
\begin{align}
    \Vec{\nabla}^2 \phi =0,
    \label{eqn:8}
\end{align}
the complex potential $\mathcal{G} = \phi + i\psi$ is an analytic function of $z=x+iy$. In these variables, the thermodynamic boundary condition is given by
\begin{align}
    \phi = \frac{h^3}{12\mu}\left(\frac{\nu}{R}+\frac{2\nu}{h}\right).
    \label{eqn:9}
\end{align}
From Eq. \eqref{eqn:4}, the kinematic boundary condition is given by
\begin{align}
    \frac{1}{h}\frac{\partial \phi}{\partial n} = U \sin{\tilde{\theta}},
    \label{eqn:10}
\end{align}
where $\tilde{\theta}$ is the angle between the $x-$axis and $\boldsymbol{n}$ denotes the outward normal from the fluid $F_2$ to fluid $F_1$. The interface conditions given by Eqs. \eqref{eqn:9} and \eqref{eqn:10} are on the advancing finger. On the walls, the boundary conditions are given by
\begin{align}
\left.\begin{aligned}
    \phi \to U\Lambda x w ~\quad \text{as} \quad x \to \infty, -w<y<w,\\
    \phi \to 0 \quad \text{as} \quad x \to -\infty, w\Lambda<|y|<w.
    \end{aligned}
    \right\}
    \label{eqn:11}
\end{align}
Here, we use the fact that $h(x) \ll w$. The system is governed by Laplace’s equation \eqref{eqn:8} together with the boundary conditions \eqref{eqn:9}- \eqref{eqn:11}.

Next, we choose to work in a frame of reference moving with the interface (say, located at $x_0 = Ut$) such that the origin is located at the fingertip and introduce the dimensionless parameters, using $w$ and $(1-\Lambda)Uw$ as length and velocity scales, respectively, as 
\begin{align}
\left.\begin{aligned}
    \Tilde{x}=\frac{x-x_0}{w},~~~ \Tilde{y}=\frac{y}{w}, ~~~\frac{1}{\Tilde{R}}=\frac{w}{R}, ~~~\Tilde{h}=\frac{h}{h_0}, \\~~~\Tilde{\phi}=\frac{\phi-Uxw}{(1-\Lambda)Uw^2}, ~~~\Tilde{\psi} = \frac{\psi - Uyw}{(1-\Lambda)Uw^2}. ~~~~~~~~
    \end{aligned}
    \right\}
    \label{eqn:12}
\end{align}
The free-boundary potential problem is illustrated in Fig. \ref{fig:3}. We assume a symmetric finger ($ab$) about the centerline of the channel --- the line occupying the $x-$axis in front of the finger is denoted by $bc$, while the cell boundary is denoted by $de$. In the complex plane $\tilde{\mathcal{G}} = \tilde{\phi}+i\tilde{\psi}$, the lateral cell boundary corresponds to the streamline $\tilde{\psi} = -1$, the central line occupies the section of the streamline $\tilde{\psi} = 0$ ($ -\infty < \tilde{\phi} < \tilde{\phi_0}$), the finger is the section of the streamline $\tilde{\psi} = 0$ ($\tilde{\phi_0} < \tilde{\phi} < \infty$) and 
\begin{align}
     \Tilde{\phi}= \frac{\nu h_0^3}{12\mu (1-\Lambda)Uw^2}{\tilde{h}}^3 \left(\frac{1}{w\tilde{R}}+\frac{2}{h_0\tilde{h}}\right) - \frac{\tilde{x} + \frac{x_0}{w}}{1-\Lambda} . 
    \label{eqn:13}
\end{align}
\begin{figure}[h]
    \centering
    \includegraphics[width=\textwidth]{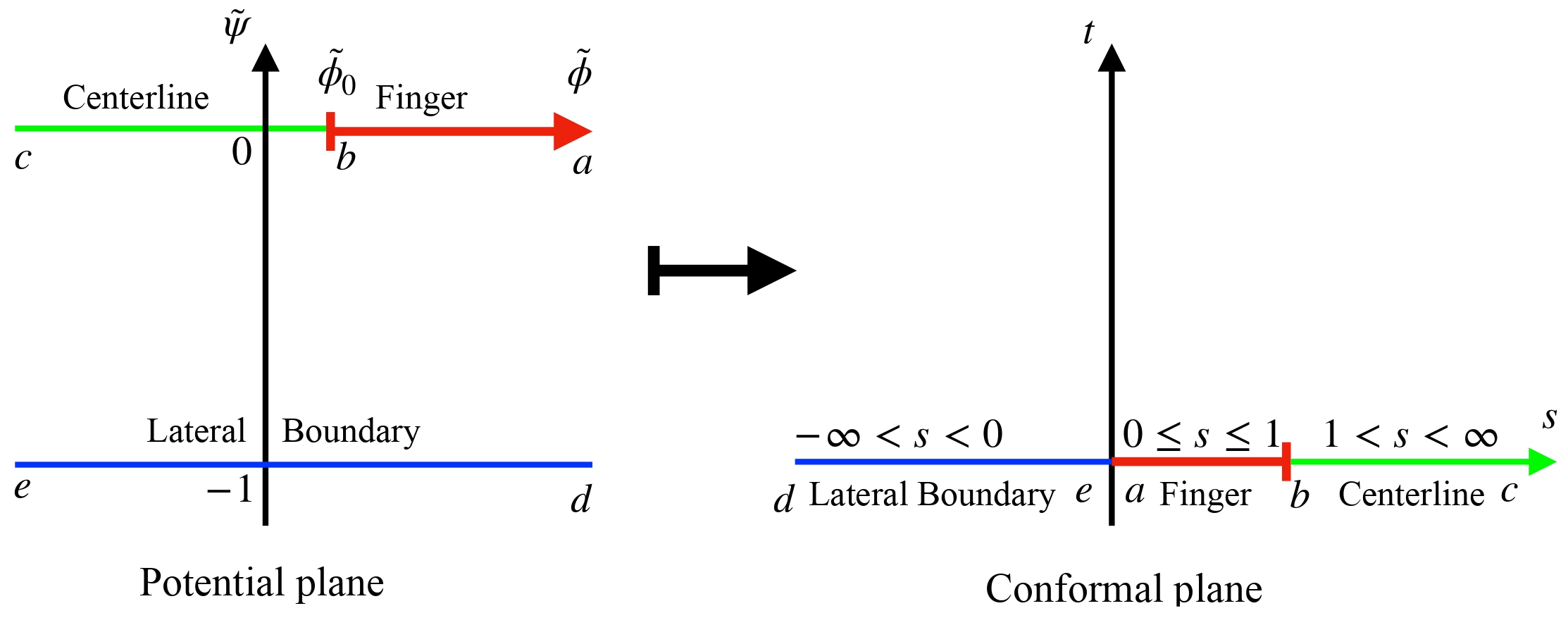}
    \caption{A schematic diagram of a Saffman-Taylor finger from potential plane to conformal plane}
    \label{fig:4}
\end{figure}

A natural (dimensionless) complex velocity of magnitude $\tilde{q}$ is given by $\tilde{u}-i\tilde{v}$. We write this complex velocity $\tilde{u}-i\tilde{v}$ in polar coordinates as $\tilde{q}e^{-i\tilde{\theta}}$ and study the logarithm of this function, i.e., $-\ln \tilde{q}+i\tilde{\theta}$ in the complex plane (using the Kirchhoff-Helmholtz free streamline theory), so that it is given by the conformal map of the potential plane,
\begin{align}
    \sigma = s +it = e^{-(\tilde{\mathcal{G}} - \tilde{\phi_0})\pi}.
    \label{eqn:14}
\end{align}
This conformal transformation maps the potential plane into the upper half $\sigma$-plane (Fig. \ref{fig:4}) with the cell boundary $(de)$, the interface $(ab)$ and the centerline $(bc)$ mapped into the real $s$-axis in the following way: the interface occupies the section $0<s<1$, the centerline the section $1<s<\infty$ and the cell boundary the negative real axis $-\infty < s < 0$. On $bc$ and $de$, $\tilde{\theta} = \pi$. 

Applying Cauchy principal-value integral theorem to the logarithm of the complex velocity, we obtain \cite{mclean1981effect}
\begin{align}
    \ln (\tilde{q}(s)) = -\frac{1}{\pi} \mathcal{P} \int_{0}^{1} \frac{\tilde{\theta}(s')-\pi}{s' - s} ds' ,
    \label{eqn:15}
\end{align}
where the constants are fixed by the requirement that $\tilde{q} \rightarrow 1$ as $\tilde{\phi} \rightarrow -\infty$.

Note that if $\Tilde{S}$ measures (dimensionless) arclength along the interface from the nose, the radius of curvature will take the simple form as
\begin{align}
    \frac{1}{\tilde{R}} = \frac{d\tilde{\theta}}{d\tilde{S}} = \frac{\partial \tilde{\theta}}{\partial \tilde{\phi}}\frac{d\tilde{\phi}}{d\tilde{S}} = \tilde{q}\frac{\partial \tilde{\theta}}{\partial \tilde{\phi}}.
    \label{eqn:16}
\end{align}
Using $s$ as the independent variable and differentiating Eq. \eqref{eqn:14} with respect to s, we have
\begin{align}
    \frac{d\tilde{\phi}}{ds} = -\frac{1}{\pi s}.
    \label{eqn:17}
\end{align}
The other relations that can be used are
\begin{align}
    \frac{d\tilde{x}}{d\tilde{S}} = \cos{\tilde{\theta}},~~~~ \frac{d\tilde{\phi}}{d\tilde{S}} = \tilde{q}.
    \label{eqn:18}
\end{align}
\subsection{Derivation of Integro-Differential equations for the interface} \label{subsec:Integrodiff}
Using the dimensionless parameters \eqref{eqn:12} and the relation \eqref{eqn:16} in the equation of the interface \eqref{eqn:13}
\begin{align}
    \tilde{\phi} = \frac{h_0^3\nu}{12\mu(1-\Lambda)Uw^3}\tilde{h}^3\tilde{q}\frac{\partial \tilde{\theta}}{\partial \tilde{\phi}} - \frac{\tilde{x}}{(1-\Lambda)} - \frac{x_0}{(
     1-\Lambda)w}+\frac{2h_0^2\nu}{12\mu(1-\Lambda)Uw^2}\tilde{h}^2 .
     \label{eqn:19}
\end{align}
Differentiate \eqref{eqn:19} along the interface and use the relations \eqref{eqn:17} and \eqref{eqn:18} to obtain the second relation between $\tilde{q}$ and $\tilde{\theta}$
\begin{align}
    \left\{ \frac{h_0^3\nu \pi^2}{12\mu Uw^3}\right\}s\tilde{q}\frac{d}{ds}\left[\tilde{h}^3\tilde{q}s\frac{d\tilde{\theta}}{ds} - \frac{2w}{h_0\pi}\tilde{h}^2\right] - \tilde{q}(1-\Lambda) = \cos{\tilde{\theta}} ~; ~~ 0<s<1~.
    \label{eqn:20}
\end{align}
The appropriate boundary conditions for \eqref{eqn:15} and \eqref{eqn:20} are
\begin{gather}
\left.\begin{aligned}
   \tilde{\theta}(0) = \pi, ~~~~ \tilde{q}(0) = \frac{1}{(1-\Lambda)}, ~~\\ \tilde{\theta}(1) = \frac{\pi}{2}, ~~~~ \tilde{q}(1) = 0.~~
    \end{aligned}
    \right\}
    \label{eqn:21}
\end{gather}
From Eqs. \eqref{eqn:15} and \eqref{eqn:21}, we obtain
\begin{align}
    \ln(1-\Lambda) = \frac{1}{\pi} \mathcal{P} \int_{0}^{1} \frac{\tilde{\theta}-\pi}{s'}ds'. 
    \label{eqn:22}
\end{align}
Set
\begin{align}
    \theta = \tilde{\theta} - \pi , \quad \tilde{q} = \frac{q}{(1-\Lambda)} ~\text{and}~ Ca_m \equiv \frac{\pi^2}{(1-\Lambda)^2}\frac{h_0^3}{w^3}\frac{1}{Ca} ~\text{where,}~ Ca = \frac{12\mu U}{\nu}
    \label{eqn:23}
\end{align}
and rewrite Eq. \eqref{eqn:15} and \eqref{eqn:20} as
\begin{align}
    \ln{q} = -\frac{s}{\pi} \mathcal{P} \int_{0}^{1} \frac{\theta(s')}{s'(s'-s)}ds', \label{eqn:24}\\
    Ca_m sq \frac{d}{ds}\left[ \tilde{h}^3sq\frac{d\theta}{ds}-\frac{2w(1-\Lambda)}{h_0\pi}\tilde{h}^2 \right] = q - \cos{\theta},
    \label{eqn:25}
\end{align}
along with the boundary condition
\begin{align}
    \left.\begin{aligned}
    \theta(0) = 0 , ~~~~ q(0) = 1,~~\\
    \theta(1) = -\frac{\pi}{2}, ~~~~ q(1) = 0.~~
    \end{aligned}
    \right\}
    \label{eqn:26}
\end{align}
Here, $Ca$ and $Ca_m$ are the Capillary number and the modified Capillary number, respectively.
A crucial feature of the integro-differential equations \eqref{eqn:24} and \eqref{eqn:25} is the fact that $Ca_m$ appears in the equation as a singular perturbation parameter. 
In the absence of the surface tension, i.e., for $Ca_m = 0$, Eqs. \eqref{eqn:24} Eq. \eqref{eqn:25} can be solved explicitly to yield
\begin{align}
    q_0(s) = \cos{\theta_0(s)} = \left[\frac{1-s}{1+\alpha_1s}\right]^{1/2} ,
    \label{eqn:27}
\end{align}
where, $\displaystyle \alpha_1 = \frac{2\Lambda -1}{(1-\Lambda)^2}$. In the Cartesian coordinates $(x,y)$, this solution reads  
\begin{align}
    y = \frac{2(1-\Lambda)W}{\pi} \ln \cos{\left[\frac{\pi x}{2\Lambda W}\right]}.
     \label{eqn:28}
\end{align}
Note that Eq. \eqref{eqn:27} is automatically consistent with the boundary conditions for any $0 \leq \Lambda \leq 1$, and thus this solution fails to determine the unique selected $\Lambda$. It is this investigation in which the solvability mechanism for pattern selection is used \cite{hong1987pattern}.



\section{Solvability theory}\label{sec:Solvability}

Considering the modified Capillary parameter as a singular perturbation parameter to the problem, the integro-differential equation \eqref{eqn:24}-\eqref{eqn:25} has been linearized around the family of solutions obtained for zero surface tension ($Ca_m = 0$). 
Thus, we obtained an inhomogeneous integro-differential equation for finite-surface-tension corrections to the shape of the Saffman-Taylor finger. 
A necessary condition for the existence of solutions has been obtained via WKB methods. This allows for the computation of the finger width as an increasing function of the modified Capillary parameter and the gap-gradient parameter. 

The main result of the present study on finger selection in a tapered Hele-Shaw cell is summarized in the following proposition. 

\begin{Proposition}
In the limit of $Ca_m \rightarrow 0$, $(q,\theta)$ satisfying Eqs. \eqref{eqn:24} and \eqref{eqn:25}, along with the boundary conditions \eqref{eqn:26} has a unique solution provided $\Lambda$ satisfies the following relation (accurate up to the leading order in $Ca_m$):
\begin{align}
    \Lambda \sim \frac{1}{2} + \left(\frac{Ca_m}{1-3\alpha\frac{x_0}{h_0}}\right)^{2/3}, \qquad\text{for} \quad \Lambda > 1/2.
     \label{eqn:29}
\end{align}
\end{Proposition}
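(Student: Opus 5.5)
We follow the Hong--Langer / Combescot \emph{et al.} solvability route, as adapted in \cite{corvera1995anisotropic, bansal2023selection}, and track only where the taper enters. We write $\theta=\theta_0+\theta_1$ and $q=q_0+q_1$ about the zero-surface-tension family \eqref{eqn:27}. In \eqref{eqn:25} we replace $\tilde h(x)=1+\alpha x/h_0$ by its value at the advancing front, $\tilde h\approx 1+\alpha x_0/h_0$. Then $\tilde h^3\approx 1+3\alpha x_0/h_0$ to $\mathbf{O}(\alpha)$. The derivative of the term $\tilde h^2$ is $\mathbf{O}(\alpha)$ and carries no $s\,d\theta/ds$ structure, so it contributes only a regular forcing. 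The effective singular parameter is therefore $\epsilon\equiv Ca_m\,\tilde h^3$, up to the sign convention for $\alpha$ relative to the flow direction, which we must fix carefully from Fig.~\ref{fig:1}. Matching the stated formula requires the combination $Ca_m/(1-3\alpha x_0/h_0)$, so we should reproduce that form; the moving-frame sign of $x_0$ versus the flow direction is the likely source of the minus sign.

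Next we linearize \eqref{eqn:24}--\eqref{eqn:25} in $\theta_1$, using $\ln q=\ln q_0+\mathcal{H}[\theta_1]$ where $\mathcal{H}$ is the Hilbert-type operator in \eqref{eqn:24}. This gives an inhomogeneous linear integro-differential equation $\mathcal{L}_\epsilon\theta_1=\epsilon\,F(s)$, with $F$ built from $s q_0\,d/ds(sq_0\,d\theta_0/ds)$. Near the tip $s\to 1$ we have $q_0\to 0$, so the highest-derivative coefficient $\epsilon s^2q_0^2$ degenerates. Following Hong--Langer, we will approximate the nonlocal term by its local action on rapidly varying WKB modes. For $\theta_1\sim e^{iK(s)/\sqrt\epsilon}$ the Hilbert transform acts as $\mp i$ times multiplication, which yields a local second-order ODE in the singular region. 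The solvability condition requires the forcing to be orthogonal to the solution of the adjoint homogeneous problem that decays at both ends of $(0,1)$:
\[
\int_0^1 F(s)\,\Phi_\epsilon(s)\,ds=0 .
\]

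The adjoint WKB solution has the form $\Phi_\epsilon\sim A(s)\exp\!\big(\epsilon^{-1/2}\!\int^s \sqrt{\cdots}\,q_0^{-3/2}\,ds'\big)$. The integral is dominated by a saddle point, or equivalently a turning point, in the complex $s$-plane. This point lies near the singularity of $q_0$ at $s=-1/\alpha_1$, which approaches the tip $s=1$ as $\alpha_1\to 1$, i.e. as $\Lambda\to 1/2$ from above, since $\alpha_1=(2\Lambda-1)/(1-\Lambda)^2$. The nonoscillatory requirement is that the phase accumulated between the singular point and the tip be $\mathbf{O}(1)$. Rescaling $1-s$ and $\alpha_1-1$ so that this phase integral is independent of $\epsilon$ gives the distinguished scaling $(2\Lambda-1)\sim c\,\epsilon^{2/3}$. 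The classical constant $c$ is inherited unchanged from the parallel case, and at $\alpha=0$ it gives $\Lambda-\tfrac12\sim Ca_m^{2/3}$ in the paper's normalization. Substituting $\epsilon=Ca_m/(1-3\alpha x_0/h_0)$ then yields \eqref{eqn:29} for the branch $\Lambda>1/2$. Uniqueness at leading order follows because the lowest solvability branch is selected; the remaining branches are the discrete higher modes, and as in \cite{hong1986analytic} we take the stable one.

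The main obstacle is the justification of the local (WKB) reduction of the nonlocal Hilbert operator near the tip, together with verifying that the taper enters \emph{only} through a rescaling of $\epsilon$. The taper could instead modify the tip-region balance through the $\mathbf{O}(\alpha)$ term from $d\tilde h^2/ds$ or through the $s$-dependence of $\tilde h$ along the interface. We would first try to show these corrections are $\mathbf{O}(\alpha\,\epsilon^{1/3})$ relative to the dominant terms in the inner region, so that they are subleading. If this fails, the coefficient would instead become an $\alpha$-dependent $f(\alpha)$ computed from the modified inner ODE.
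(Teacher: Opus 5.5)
\textbf{Gap 1: the $\epsilon^{2/3}$ law is not derived.} As $\Lambda\to\tfrac{1}{2}^{+}$ one has $\alpha_1=(2\Lambda-1)/(1-\Lambda)^2\to 0^{+}$, not $\alpha_1\to1$. So the singular point $s=-1/\alpha_1$ of $q_0$ recedes to $-\infty$, and nothing approaches the tip $s=1$. Neither a phase accumulated ``between the singular point and the tip'' nor a rescaling of $1-s$ and $\alpha_1-1$ can produce the exponent.

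The coalescence that actually drives selection shows up in the slope variable $\eta$ of \eqref{eqn:40}. The local (WKB) reduction of the Hilbert term gives the symbol $\bar Q\propto(1+i\eta)^{3/2}(1-i\eta)^{1/2}/(1+\beta^2\eta^2)^2$. Its zero at $\eta=i$, the image of $s=-1/\alpha_1$, is the stationary point of $S_0$. Meanwhile $\bar Q^{1/2}$ has a simple pole at $\eta=i/\beta$, with $\beta=\Lambda/(1-\Lambda)$; this is the image of $s=\infty$. For $\beta>1$ the steepest-descent contour must loop around $i/\beta$, and the residue adds the phase $2\pi(2\Lambda-1)^{3/4}/[(1-\Lambda)\sqrt{\epsilon}]$ to the exponent. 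The cusp function then carries the corresponding cosine factor, whose first zero gives $(2\Lambda-1)^{3/4}\simeq\sqrt{\epsilon}/8$, i.e.\ $\Lambda-\tfrac{1}{2}\simeq 2^{-5}\epsilon^{2/3}$.

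The $2/3$ therefore comes from the $3/4$ power, i.e.\ from the order-$3/2$ zero of $\bar Q=\bar Q_1+i\bar Q_2$. Your sketch never identifies this mechanism. Note also that the constant is $2^{-5}$, not $1$; the ``$\sim$'' in \eqref{eqn:29} hides it.

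\textbf{Gap 2: the sign discussion is a misdiagnosis.} We have $Ca_m\tilde h^3=Ca_m(1+3\alpha x_0/h_0)+\mathbf{O}(\alpha^2)=Ca_m/(1-3\alpha x_0/h_0)+\mathbf{O}(\alpha^2)$. This is exactly how the paper reconciles \eqref{eqn:86} with \eqref{eqn:alternative_rep}; no orientation convention is involved. Flipping the sign of $x_0$, as you suggest, would give $\Lambda-\tfrac{1}{2}\propto 1-2\alpha x_0/h_0$, the opposite trend to \eqref{eqn:87}.

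\textbf{What is right, and the repair.} You correctly identify $\epsilon=Ca_m\tilde h^3$ as the only channel through which the taper enters; physically it is $Ca_m$ evaluated with the local gap $h(x_0)$. Under the paper's standing assumption $|\alpha w/h_0|\ll 1$, so that $d\tilde h/ds=0$ with $\tilde h\equiv 1+\alpha x_0/h_0$, this reduction is exact rather than an obstacle. The $\tilde h^2$ term is annihilated by $d/ds$ instead of surviving as a forcing. Then \eqref{eqn:24}--\eqref{eqn:26} become literally the parallel-cell problem with $Ca_m$ replaced by $\epsilon$.

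The paper does not exploit this. It carries the factor $(1+\alpha x_0/h_0)^{-3}$ in $\bar Q_1,\bar Q_2$ (with $\bar R$ independent of $\alpha$) through the whole WKB and steepest-descent computation. Consistently, the cosine in \eqref{eqn:85} depends on $\alpha$ and $Ca_m$ only through $\sqrt{Ca_m}\,(1+\alpha x_0/h_0)^{3/2}=\sqrt{\epsilon}$.

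The cleanest repair is to drop the heuristic turning-point argument and state the exact rescaling. Then substitute $\epsilon$ into the parallel-cell law $\Lambda-\tfrac{1}{2}\simeq 2^{-5}Ca_m^{2/3}$ (the $\alpha=0$ case of \eqref{eqn:86}). Alternatively, if you want a self-contained proof, redo the cusp-function evaluation in the $\eta$-plane as described above.
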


\begin{proof}
We begin the proof of the above statement by first stating the following vital result, which gives us the necessary condition but not a sufficient condition for the existence of a unique solution. 
\begin{Lemma}
    Consider $\Theta \in C^{\infty}(\mathbb{R})$ and a differential operator $\mathcal{L}$ such that
    \begin{align}
    \mathcal{L}\Theta = \bar{R},
    \label{eqn:l1}
    \end{align}
where $\bar{R} \in L^1(\mathbb{R})$. Let $\Theta_0$ be the null eigenvector of the adjoint of $\mathcal{L}$, or $\Theta_0 \in \mathcal{N}(\mathcal{L}^{\dagger})$. Further, define the cusp function, $\mathcal{C}\in L^1(\mathbb{R})$, as follows
\begin{align}
    \mathcal{C} = \int_{-\infty}^{\infty} d\eta \Theta_0 \bar{R}(\eta).
    \label{eqn:d1}
\end{align}
Then $\mathcal{C} \equiv 0$ if $\Theta$ exists uniquely.
\end{Lemma}

\begin{proof}
The proof of the lemma follows from the observation that if we can find a null eigenvector of the adjoint of $\mathcal{L}$, that is, $\mathcal{L}^{+}\Theta_0 = 0$, then we can write
\begin{align}
    \int_{0}^{\infty} d\eta \Theta_0 \mathcal{L}\Theta = \int_{0}^{\infty} d\eta \Theta_0 \bar{R} = \int_{0}^{\infty} d\eta \Theta (\mathcal{L^+}\Theta_0) =0
    \label{eqn:30}
\end{align}
and therefore, the cusp function (defined in \eqref{eqn:d1}) vanishes, i.e., $\mathcal{C} \equiv 0$.

Note that it is the vanishing of the cusp function that gives the condition for the selection of the finger. Rephrasing the idea, we could say that if we can find a $\Theta_0$ such that there exists a sufficiently well-behaved $\Theta$ for which all integrals in Eq. \eqref{eqn:30} are convergent, the cusp function must vanish. However, the vanishing of the cusp function in no way guarantees the existence of $\Theta$.  
\end{proof}

Next, we linearize ~$(q,\theta)$~ around the zero-modified Capillary parameter (or zero-surface tension solution) by defining $\theta_1$ through
\begin{align}
\theta(s) \approx \theta_0(s) + Ca_m \theta_1(s),
\label{eqn:31}
\end{align}
where $\theta_1$ should satisfy the boundary conditions given by $\theta_1(0) =\theta_1(1) =0$. By defining
\begin{align}
\ln{q_1(s)} = -\frac{s}{\pi} \mathcal{P} \int_{0}^{1} \frac{\theta_1(s')}{s'(s'-s)}ds',
\label{eqn:32}
\end{align}
we can write 
\begin{align}
    q \approx q_0(1+Ca_m \ln q_1). 
    \label{eqn:33}
\end{align}
Further, using Eq. \eqref{eqn:27}, \eqref{eqn:31}, we have 
\begin{align}
    \cos{\theta} \approx q_0 - Ca_m \theta_1 \sin{\theta_0}.
    \label{eqn:34}
\end{align}
Substituting linearized version of $(q, \theta)$ in Eq. \eqref{eqn:25} one obtains
\begin{gather}
    Ca_m q_0(1+Ca_m \ln q_1)s\frac{d}{ds}\left[ \tilde{h}^3s q_0(1+Ca_m\ln q_1)\frac{d}{ds}(\theta_0+Ca_m\theta_1) - \frac{2w(1-\Lambda)}{h_0 \pi}\tilde{h}^2 \right] \nonumber\\= q_0(1+Ca_m\ln q_1) -q_0 + Ca_m\theta_1 \sin\theta_0.
         \label{eqn:35}
\end{gather}
A singular perturbation expansion is carried out in which terms quadratic in $Ca_m$ and terms cubic in $Ca_m$ multiplying derivatives of $\theta_1$ are neglected in the above calculation. 
\begin{gather}
   Ca_m \frac{d^2\theta_1}{ds^2}+Ca_m \left\{\frac{1}{q_0 s\tilde{h}^3}\frac{d}{ds}(q_0 s\tilde{h}^3)\right\}\frac{d\theta_1}{ds} + \left\{\frac{-\sin \theta_0}{(q_0 s)^2\tilde{h}^3} \right\}\theta_1 +  \frac{1}{q_0 s^2\tilde{h}^3\pi} \mathcal{P} \int_{0}^{1} ds' \frac{s\theta_1(s')}{s'(s'-s)} \nonumber\\= -\frac{1}{q_0 s\tilde{h}^3}\left\{ \frac{d}{ds}\left[ \tilde{h}^3q_0 s \frac{d\theta_0}{ds} - \frac{2w(1-\Lambda)}{h_0 \pi}\tilde{h}^2\right] \right\}. 
    \label{eqn:36}
\end{gather}
We rewrite the above equation as 
\begin{align}
    Ca_m \frac{d^2\theta_1}{ds^2} + Ca_m P(s) \frac{d\theta_1}{ds} + Q(s)\theta_1 + H(s) \mathcal{P} \int_{0}^{1} ds' \frac{s\theta_1(s')}{s'(s'-s)} = R(s).
    \label{eqn:37}
\end{align}
where,
\begin{align}
\left.\begin{aligned}
    P(s) = \frac{1}{q_0 s\tilde{h}^3}\frac{d}{ds}(q_0 s\tilde{h}^3),~~~~\\
    Q(s) = \frac{-\sin \theta_0}{(q_0 s)^2\tilde{h}^3},~~~~\\
    H(s) = \frac{1}{q_0 s^2\tilde{h}^3\pi},~~~~\\
    R(s) = -\frac{1}{q_0 s\tilde{h}^3}\left\{ \frac{d}{ds}\left[ \tilde{h}^3q_0 s \frac{d\theta_0}{ds} - \frac{2w(1-\Lambda)}{h_0 \pi}\tilde{h}^2\right] \right\}.~~~~
    \end{aligned}
    \right\}
    \label{eqn:38}
\end{align}
Al-Housseiny et al. \cite{al2012control} performed experiment to investigate the effects of gap-gradient in a tapered Hele-Shaw cell supported by linear stability theory when $\alpha^2 \ll 1$ and $\displaystyle \left \lvert \frac{\alpha w}{h_0} \right \rvert \ll 1$. In the current study, under these assumptions we have $\displaystyle \frac{d\tilde{h}}{ds} = 0$. 
Noting that $\sin{\theta_0} = - (1-\cos^2\theta_0)^{1/2}$ in the fourth quadrant, and using equation \eqref{eqn:27} in \eqref{eqn:38}, we obtain
\begin{align}
    \left.\begin{aligned}
    P(s) = \frac{1}{s} - \frac{1+\alpha_1}{2(1-s)(1+\alpha_1 s)},~~\\
    Q(s) =\frac{1}{\left(1+\alpha\frac{x_0}{h_0} \right)^3}\frac{\sqrt{(1+\alpha_1 s)(1+\alpha_1)}}{s^{\frac{3}{2}}(1-s)},~~\\
    H(s) = \frac{1}{\pi}\frac{1}{\left(1+\alpha\frac{x_0}{h_0} \right)^3}\frac{\sqrt{1+\alpha_1 s}}{s^2(\sqrt{1-s})},~~ \\
    R(s) = \frac{\sqrt{1+\alpha_1}}{2\sqrt{s}(1+\alpha_1 s)(\sqrt{1-s})}\left[\frac{1}{2s} -\frac{3\alpha_1}{2(1+\alpha_1 s)} \right].~~
    \end{aligned}
    \right\}
    \label{eqn:39}
\end{align}
Applying the following change of variable
\begin{align}
    \eta = \left[ \frac{1-s}{(1+\alpha_1)s} \right]^{1/2},
    \label{eqn:40}
\end{align}
Eq. \eqref{eqn:37} can be expressed as
\begin{align}
    Ca_m \frac{d^2\theta_1(\eta)}{d\eta^2}+Ca_m \left[\frac{d^2\eta}{ds^2}+P\frac{d\eta}{ds}\right]\frac{1}{(\frac{d\eta}{ds})^2} \frac{d\theta_1 (\eta)}{d\eta}+\frac{Q}{(\frac{d\eta}{ds})^2}\theta_1(\eta)+\frac{H}{(\frac{d\eta}{ds})^2}I = \frac{R}{(\frac{d\eta}{ds})^2}, 
    \label{eqn:41}
\end{align}
where $I$ in Eq. \eqref{eqn:41} represents the integral
\begin{align}
     I = \mathcal{P} \int_{0}^{1} ds' \frac{s\theta_1(s')}{s'(s'-s)}.
     \label{eqn:42}
\end{align}
appearing in Eq. \eqref{eqn:37}. 
It is worth mentioning that the slope $\eta$ varies from $-\infty$ to $\infty$ as one goes all the way around the finger passing through the tip at $\eta = 0$.
Furthermore, Eq. \eqref{eqn:40} can be  recast as $s = 1/(1+\beta^2\eta^2)$, where, $\beta^2 = 1+\alpha_1$.

Next, we employ a change of variable to eliminate the first derivative from Eq. \eqref{eqn:41}. This is done by introducing $\Theta(\eta)$ through $\theta_1(\eta) = g(\eta)\Theta(\eta)$ and $g(\eta)$ is chosen in such a way that the first derivative vanishes. In terms of this new variable, Eq. \eqref{eqn:37} reads
\begin{gather}
    Ca_m\frac{d^2\Theta(\eta)}{d\eta^2}+Ca_m \left[ \frac{\frac{d^2\eta}{ds^2}+P\frac{d\eta}{ds}}{(\frac{d\eta}{ds})^2} + 2\frac{g'}{g} \right] \frac{d\Theta(\eta)}{d\eta} + Ca_m \left[ \frac{g''}{g} +  \frac{\frac{d^2\eta}{ds^2}+P\frac{d\eta}{ds}}{(\frac{d\eta}{ds})^2} \frac{g'}{g}\right]\Theta(\eta) \nonumber \\ ~~~~~~~~~~~~~~~~~~~~~~~~~~~~~~~~+ \frac{Q(\eta)}{(\frac{d\eta}{ds})^2}\Theta(\eta) + \frac{H}{g(\frac{d\eta}{ds})^2}I = \frac{R}{g(\frac{d\eta}{ds})^2}.
    \label{eqn:43}
\end{gather} 
To eliminate the first derivative, we equate the coefficient of $\frac{d\Theta}{d\eta}$ to zero, yielding
\begin{align}
    \ln g = \ln k -\frac{1}{2}\int \frac{\frac{d^2\eta}{ds^2}+P\frac{d\eta}{ds}}{(\frac{d\eta}{ds})^2} d\eta.
    \label{eqn:44}
\end{align}
To proceed further, following the analysis of Corvera \cite{corvera1995anisotropic} and Bansal {\emph et al.} \cite{bansal2023selection}, only those terms that are consistent with our first approximation have been kept; that is, non-singular terms of order $Ca_m$ have been ignored. 
Thus, Eq. \eqref{eqn:43} simplies to 
\begin{align}
     Ca_m\frac{d^2\Theta(\eta)}{d\eta^2}+\bar{Q}_1\Theta + \bar{H} =\bar{R},
     \label{eqn:45}
\end{align}
where,
\begin{align}
    \bar{Q}_1 = \frac{Q(\eta)}{(\frac{d\eta}{ds})^2}, \quad
    \bar{H} = \frac{H}{g(\frac{d\eta}{ds})^2}I, \quad
    \bar{R} = \frac{R}{g(\frac{d\eta}{ds})^2}.
    \label{eqn:46}
\end{align}
Considering that the derivatives of $\eta$ are given by
\begin{align}
    \left.\begin{aligned}
    \frac{d\eta}{ds} = \frac{d}{ds}\left[\frac{1-s}{s(1+\alpha_1)}\right]^{1/2} = -\frac{1}{2\sqrt{(1+\alpha_1)(1-s)}s^{3/2}}, ~~\text{and}~~\\
    \frac{d^2\eta}{ds^2} = -\frac{1}{2\sqrt{(1+\alpha_1)(1-s)}s^{3/2}}\left[\frac{1}{2(1-s)} -\frac{3}{2s}\right],
    \end{aligned}
    \right\}
    \label{eqn:47}
\end{align}
the following from  of $g(s)$ is obtained
\begin{align}
    g(s) = \frac{s^{1/4}(1+\alpha_1 s)^{1/4}}{(1+\alpha_1)^{1/4}},
    \label{eqn:48}
\end{align}
which in terms of new variable $\eta$ is given by
\begin{align}
    g(\eta) = \frac{(1+\eta^2)^{1/4}}{(1+\eta^2 \beta^2)^{1/2}}. 
    \label{eqn:49}
\end{align}
Therefore, 
\begin{align}
    \Theta(\eta) = \frac{\theta_1(\eta)}{g(\eta)} = \frac{(1+\eta^2 \beta^2)^{1/2}}{(1+\eta^2)^{1/4}}\theta_1(\eta).
    \label{eqn:50}
\end{align}
Next, the coefficients of Eq. \eqref{eqn:45} given in Eqs. \eqref{eqn:46} are expressed in terms of the transformed variable $\eta$. We have 
\begin{align}
    I = \mathcal{P} \int_0^1 \frac{s\theta_1(s')}{s'(s'-s)}ds'=\int_{\infty}^0 \frac{(1+\eta'^2)^{1/4}}{(1+\beta^2 \eta'^2)^{1/2}} \left[\frac{1}{\eta+\eta'} -\frac{1}{\eta -\eta'} \right]\Theta(\eta')d\eta'.
    \label{eqn:51}
\end{align}
This leads to 
\begin{align}
    \bar{H} = \frac{1}{\pi(1+\alpha \frac{x_0}{h_0})^3}\int_{\infty}^0 \frac{4\beta^4\eta(1+\eta^2)^{1/4}}{(1+\beta^2\eta^2)^{3/2}} \frac{(1+\eta'^2)^{1/4}}{(1+\beta^2 \eta'^2)^{1/2}} \left[\frac{1}{\eta+\eta'} -\frac{1}{\eta -\eta'} \right]\Theta(\eta')d\eta'.
    \label{eqn:52}
\end{align}
Assuming that $\Theta$ is an odd function, i.e., $\Theta(-\eta') =-\Theta(\eta')$, $\bar{H}$ can be expressed as 
\begin{align}
    \bar{H} = \frac{1}{\pi(1+\alpha \frac{x_0}{h_0})^3}\int_{-\infty}^{\infty} \frac{4\beta^4\eta(1+\eta^2)^{1/4}}{(1+\beta^2\eta^2)^{3/2}} \frac{(1+\eta'^2)^{1/4}}{(1+\beta^2 \eta'^2)^{1/2}}\frac{\Theta(\eta')}{\eta -\eta'}d\eta'.
    \label{eqn:53}
\end{align}
The extended range of $\eta \in (-\infty, \infty)$ including the negative real axis is applicable since only symmetric fingers are considered here. Hence, $\theta_1$ and therefore $\Theta$ are anti-symmetric functions of their arguments. The coefficients of the linear term and the inhomogeneous term are redefined as $\displaystyle \bar{Q}_1 = Q/\left( \frac{d\eta}{ds} \right)^2$ and $ \displaystyle \bar{R} = R/g\left( \frac{d\eta}{ds} \right)^2$. 

Equations \eqref{eqn:45} and \eqref{eqn:46} have been transformed into
\begin{align}
    Ca_m\frac{d^2\Theta(\eta)}{d\eta^2} + \bar{Q}_1\Theta(\eta) + \frac{1}{\pi}\mathcal{P}\int_{-\infty}^{\infty} \frac{\bar{Q}_2(\eta,\eta')\Theta(\eta')}{\eta-\eta'}d\eta' = \bar{R}(\eta),
     \label{eqn:54}
\end{align}
where,
\begin{align}
    \left.\begin{aligned}
    \bar{Q}_1(\eta) = \frac{4\beta^4}{(1+\alpha \frac{x_0}{h_0})^3}\frac{(1+\eta^2)^{1/2}}{(1+\eta^2\beta^2)^2},~~\\
    \bar{Q}_2(\eta,\eta') = \frac{4\beta^4}{(1+\alpha \frac{x_0}{h_0})^3} \frac{\eta(1+\eta^2)^{1/4}}{(1+\beta^2\eta^2)^{3/2}} \frac{(1+\eta'^2)^{1/4}}{(1+\beta^2 \eta'^2)^{1/2}},~~\\
    \bar{R}(\eta) = \frac{\eta[3+\beta^2(\eta^2-2)]}{(1+\beta^2\eta^2)^{1/2}(1+\eta^2)^{9/4}}.~~
    \end{aligned}
    \right\}
    \label{eqn:55}
\end{align}
Equation \eqref{eqn:54} has the form
\begin{align}
    \mathcal{L}\Theta(\eta) = \bar{R}(\eta).
    \label{eqn:56}
\end{align}
Next, we use cusp function definted in \eqref{eqn:d1} to find the equation for the null eigenvector of the adjoint operator $\mathcal{L}^{\dagger}$. Multiplying $\Theta_0$ and integrating the left hand side of Eq. \eqref{eqn:56} can rewritten as
\begin{align}
    \int_{0}^{\infty} d\eta \Theta_0(\eta) \mathcal{L}\Theta(\eta) = \underbrace{\int_{0}^{\infty} d\eta \Theta_0(\eta) Ca_m \frac{d^2\Theta}{d\eta^2}}_{(A)} + \underbrace{\int_{0}^{\infty} d\eta \Theta_0(\eta)\bar{Q}_1\Theta(\eta)}_{(B)} ~~~~~~~~~~~~~~~\nonumber \\+ \int_{0}^{\infty} d\eta \Theta_0(\eta) \frac{1}{\pi} \left\{\underbrace{\mathcal{P}\int_{-\infty}^{0} \frac{\bar{Q}_2(\eta,\eta')\Theta(\eta')}{\eta-\eta'}d\eta'}_{(C)} + \underbrace{ \mathcal{P}\int_{0}^{\infty} \frac{\bar{Q}_2(\eta,\eta')\Theta(\eta')}{\eta-\eta'}d\eta'}_{(D)} \right\}. 
    \label{eqn:57}
\end{align}
Integrating by parts, the integral $(A)$ in \eqref{eqn:57} can be written as, 
\begin{gather}
    \int_{0}^{\infty} d\eta \Theta_0(\eta) Ca_m \frac{d^2\Theta}{d\eta^2} = Ca_m \int_{0}^{\infty} d\eta \Theta_0(\eta) \frac{d}{d\eta}\left(\frac{d\Theta}{d\eta}\right) = Ca_m \left[ (\Theta_0\Theta' - \Theta_0'\Theta)|_{0}^{\infty}\right ] \nonumber \\ + \int_{0}^{\infty} Ca_m \Theta \frac{d^2\Theta_0}{d\eta^2} d\eta= \int_{0}^{\infty} Ca_m \Theta \frac{d^2\Theta_0}{d\eta^2} d\eta.
    \label{eqn:58}
\end{gather}
Analysis of the term in the square bracket in \eqref{eqn:58}, yields that this term vanishes on the side of the finger, for which $\eta = \infty$ ($s = 0$, see \eqref{eqn:40}) \cite{corvera1995saffman, hong1986analytic}. Next, we consider this term at the fingertip ($\eta = 0$). Although the explicit form of $\Theta$ is unknown, using relation \eqref{eqn:50}, $\Theta$ can be expressed in terms of the angle at the tip of the finger. From \eqref{eqn:50}, we find that when $\eta = 0$, i.e., at the tip of the finger, $\left. \Theta \right \rvert_{tip} = \left. \theta_1 \right \rvert_{tip}$ and $\displaystyle \left. \frac{d\theta_1}{d\eta} \right\rvert_{tip} = \displaystyle \left. \frac{d\theta_1}{d\eta} \right\rvert_{tip}$. Furthermore, we find that $\displaystyle \left. \frac{d \Theta}{d\eta} \right\rvert_{tip} = \left( \left. \frac{d\theta_1}{ds} \right \rvert_{tip} \right) \left. \left( \frac{ds}{d\eta} \right\rvert_{tip} \right) = 0$ as $\displaystyle \left. \frac{ds}{d\eta} \right\rvert_{tip} = 0$. Therefore, $\left. \Theta_0 \Theta^\prime \right \rvert_{\eta = 0}$ in \eqref{eqn:58} vanishes. For physically meaningful solutions, we require $\theta_1|_{tip}$ to vanish (or, $\theta_1(0) = 0$) and thus $\left. \Theta_0^\prime \Theta \right \rvert_{\eta = 0}$ also vanishes. 

Term $(B)$ in \eqref{eqn:57} can be rewritten as
\begin{align}
    \int_{0}^{\infty} d\eta \Theta_0(\eta)\bar{Q}_1\Theta(\eta) = \int_{0}^{\infty} d\eta \Theta(\eta)\bar{Q}_1\Theta_0(\eta).
    \label{eqn:59}
\end{align}
By changing $\eta' \rightarrow -\eta'$ and $\eta \rightarrow -\eta$, term $(C)$ in \eqref{eqn:57} becomes
\begin{align*}
    \int_{0}^{\infty} d\eta \Theta_0(\eta) \frac{1}{\pi}\mathcal{P}\int_{-\infty}^{0} \frac{\bar{Q}_2(\eta,\eta')\Theta(\eta')}{\eta-\eta'}d\eta'=  \int_{-\infty}^{0} d\eta' \Theta(\eta') \frac{1}{\pi}\mathcal{P}\int_{0}^{\infty} \frac{\bar{Q}_2(\eta,\eta')\Theta_0(\eta)}{\eta-\eta'}d\eta\\ = \int_{\infty}^{0} -d\eta' \Theta(-\eta') \frac{1}{\pi}\mathcal{P}\int_{0}^{-\infty} -d\eta\frac{\bar{Q}_2(-\eta,-\eta')\Theta_0(-\eta)}{\eta'-\eta}.
\end{align*}
Noting $\bar{Q}_2(-\eta, -\eta') = -\bar{Q}_2(\eta, \eta')$ and $\Theta(-\eta') = -\Theta(\eta')$, and the fact that $\Theta_0(\eta)$ was obtained such that $\Theta_0(-\eta) = -\Theta_0(\eta)$, term $(C)$ reduces to
\begin{gather}
    \int_{\infty}^{0} d\eta' \Theta(\eta') \frac{1}{\pi}\mathcal{P}\int_{0}^{-\infty}d\eta\frac{\bar{Q}_2(\eta,\eta')\Theta_0(\eta)}{\eta-\eta'}
    = \int_{0}^{\infty} d\eta \Theta(\eta) \frac{1}{\pi}\mathcal{P}\int_{-\infty}^{0}d\eta'\frac{\bar{Q}_2(\eta',\eta)\Theta_0(\eta')}{\eta'-\eta},
    \label{eqn:60}
\end{gather}
where we have changed $\eta' \rightarrow \eta$ and $\eta \rightarrow \eta'$ to write the last equality. By changing $\eta' \rightarrow \eta$ and $\eta \rightarrow \eta'$, we can write the fourth term, $(D$, in Eq. \eqref{eqn:57} as
\begin{align}
    \int_{0}^{\infty} d\eta \Theta_0(\eta) \frac{1}{\pi} \mathcal{P}\int_{0}^{\infty} \frac{\bar{Q}_2(\eta,\eta')\Theta(\eta')}{\eta-\eta'}d\eta' = \int_{0}^{\infty} d\eta \Theta(\eta) \frac{1}{\pi} \mathcal{P}\int_{0}^{\infty} \frac{\bar{Q}_2(\eta',\eta)\Theta_0(\eta')}{\eta'-\eta}d\eta'.
    \label{eqn:61}
\end{align}
Combining Eqs. \eqref{eqn:58}--\eqref{eqn:61} in Eq. \eqref{eqn:57}, we rewrite 
\begin{align}
    \int_{0}^{\infty} d\eta \Theta_0(\eta) \mathcal{L}\Theta(\eta) = Ca_m \int_{0}^{\infty} d\eta\Theta \frac{d^2\Theta_0}{d\eta^2} + \int_{0}^{\infty} d\eta \Theta(\eta)\bar{Q}_1\Theta_0(\eta) \nonumber\\+ \int_{0}^{\infty} d\eta \Theta(\eta) \frac{1}{\pi} \mathcal{P}\int_{-\infty}^{\infty} \frac{\bar{Q}_2(\eta',\eta)\Theta_0(\eta')}{\eta'-\eta}d\eta'.
    \label{eqn:62}
\end{align}
We define the operator $\mathcal{L}^{\dagger}$ such that
\begin{align*}
    \mathcal{L}^{\dagger} \Theta_0 \equiv Ca_m \frac{d^2\Theta_0}{d\eta^2} + \bar{Q}_1\Theta_0 + \frac{1}{\pi} \mathcal{P}\int_{-\infty}^{\infty} \frac{\bar{Q}_2(\eta',\eta)\Theta_0(\eta')}{\eta'-\eta}.
\end{align*}
Now, as stated earlier, if we can find a solution to the equation 
\begin{align}
    \mathcal{L^{\dagger}}\Theta_0 = 0, ~~~\text{with}~ \Theta_0(-\eta)=-\Theta_0(\eta), 
    \label{eqn:63}
\end{align} 
the right hand side of Eq. \eqref{eqn:62} vanishes and the cusp function $\mathcal{C}=2\int_{0}^{\infty} d\eta \Theta_0(\eta) \mathcal{L}\Theta(\eta)=2\int_{0}^{\infty} d\eta \Theta_0(\eta)\bar{R}$ vanishes as well. 

What is done next is to use the WKB method to compute a null eigenvector of the operator $\mathcal{L}^{\dagger}$ that is, to find a solution to Eq. \eqref{eqn:63}. 

\subsection{WKB Approximation} \label{subsec:WKB}
Suppose that the solutions of Eq. \eqref{eqn:63} have the WKB form $e^{\frac{S}{\sqrt{Ca_m}}}$, where $\Re(S) < 0$, and it has a points of stationary phase (that is points $\bar{\eta}$ where $S'(\bar{\eta})=0$ ). Then, in the limit of $Ca_m \rightarrow 0$, the integral can be evaluated by expanding the exponent around the point of the stationary phase:
\begin{align*}
     \mathcal{P}\int_{-\infty}^{\infty} d\eta' \frac{\bar{Q}_2(\eta',\eta)\Theta_0(\eta')}{\eta'-\eta} \simeq e^{\frac{S(\eta)}{\sqrt{Ca_m}}} \mathcal{P}\int_{-\infty}^{\infty} d\eta' \frac{\bar{Q}_2(\eta',\eta) e^{\frac{S''(\eta)(\eta'-\eta)^2}{2! \sqrt{Ca_m}}} }{\eta'-\eta} ~~~~(\text{since}~ S'(\eta)=0).
\end{align*}
The only contribution to the integral which is not exponentially small comes from the pole at $\eta'=\eta$, hence 
\begin{align*}
    \mathcal{P}\int_{-\infty}^{\infty} d\eta' \frac{\bar{Q}_2(\eta',\eta)\Theta_0(\eta')}{\eta'-\eta} = \pi i e^{\frac{S(\eta)}{\sqrt{Ca_m}}} e^{\frac{S''(\eta)(\eta-\eta)^2}{2! \sqrt{Ca_m}}} \bar{Q}_2(\eta,\eta) = \pi i \bar{Q}_2(\eta,\eta) \Theta_0 ~~~~(\text{as,}~\Theta_0 = e^{\frac{S(\eta)}{\sqrt{Ca_m}}} ).
\end{align*}
Also, define $\bar{Q} = \bar{Q}_1 + i \bar{Q}_2(\eta,\eta)$. 
The equation for the null eigenvector of $\mathcal{L}^{\dagger}$ becomes
\begin{align}
    Ca_m \frac{d^2\Theta_0(\eta)}{d\eta^2} + \bar{Q}\Theta_0(\eta) = 0, 
    \label{eqn:64}\\
    \text{where} \quad \bar{Q} = \frac{1}{(1+\alpha \frac{x_0}{h_0})^3}\frac{4\beta^4 (1+i\eta)^{3/2} (1-i\eta)^{1/2}}{(1+\beta^2\eta^2)^2}.
    \label{eqn:65}
\end{align}
In what follows we find the WKB solution of Eq. \eqref{eqn:64}. 

As mentioned above, we search for a solution of the form $\Theta_0 \sim e^{\frac{S}{\sqrt{Ca_m}}}$, where $S$ can be expanded in powers of $Ca_m^{1/2}$, that is,
\begin{align}
    S = \sum_{n=0}^{\infty} S_n Ca_m^{n/2}.
    \label{eqn:wkb}
\end{align}
Here, we have started with the first two terms, that is, $S_0$ and $S_1$. (In Appendix \ref{sec:re-exam} we have shown that a WKB analysis incorporating all terms in the expansion changes the results approximately 5\% in the pre-factor.) Therefore, we have
\begin{align}
    \left.\begin{aligned}
    \Theta_0 = e^{\frac{S_0}{\sqrt{Ca_m}}+S_1},\\
    \frac{d\Theta_0}{d\eta} = (\frac{S'_0}{\sqrt{Ca_m}}+S'_1)e^{\frac{S_0}{\sqrt{Ca_m}}+S_1},\\
    \frac{d^2\Theta_0}{d\eta^2} = \left\{(\frac{S''_0}{\sqrt{Ca_m}}+S''_1)+(\frac{S'_0}{\sqrt{Ca_m}}+S'_1)^2\right\}e^{\frac{S_0}{\sqrt{Ca_m}}+S_1}.
    \end{aligned}
    \right\}
    \label{eqn:66}
\end{align}
Now substituting Eq. \eqref{eqn:66} in Eq. \eqref{eqn:64}, we have
\begin{align}
    Ca_m \left(\frac{S''_0}{\sqrt{Ca_m}}+S''_1 + \frac{{S'_0}^2}{Ca_m}+2\frac{S_0' S_1'}{\sqrt{Ca_m}} +{S'_1}^2\right) + \bar{Q} = 0.
\end{align}
By equating terms order by order in $Ca_m$, only to orders consistent with the approximation, we have,
\begin{enumerate}
    \item ${S_0'}^2 + \bar{Q} =0 ~~({Ca_m}^0$-th order),
    \item $S_0'' + 2S_0'S_1' =0 ~~({Ca_m}^{1/2}$-th order),
\end{enumerate}
which gives a set of equations for $S_0$ and $S_1$ as
\begin{gather}
    S_0 = i \int_{0}^{\eta} \bar{Q}^{1/2} d\eta ,
    \label{eqn:67}\\
    S_1 = \ln \frac{1}{\bar{Q}^{1/4}}.
    \label{eqn:68}
    \end{gather}
Eq. \eqref{eqn:67} and Eq. \eqref{eqn:68} gives
    \begin{gather}
    \Theta_0 = \frac{e^{S_0/\sqrt{Ca_m}}}{\bar{Q}^{1/4}} , 
    \label{eqn:69}\\
    S_0 = i \int_{0}^{\eta} \bar{Q}^{1/2} d\eta = \frac{2i\beta^2}{(1+\alpha \frac{x_0}{h_0})^{3/2}}\int_{0}^{\eta} \frac{(1+i\eta')^{3/4} (1-i\eta')^{1/4}}{(1+\beta^2\eta'^2)} d\eta'.
    \label{eqn:70}
\end{gather}
Note that the complex conjugate of $\Theta_0$ is also a solution of Eq. \eqref{eqn:63}. So, the appropriate antisymmetric combination of the solution is given by 
\begin{align}
\frac{1}{2i}[\Theta_0 - \Theta^*_0] = \Im\Theta_0.
\end{align}
Therefore, the cusp function can be written as
\begin{align}
    \mathcal{C}(\Lambda, Ca_m,\alpha) = \int_{-\infty}^{\infty} d\eta \Im \Theta_0 \bar{R}(\eta) = \int_{-\infty}^{\infty} d\eta \mathcal{F}(\eta) e^{S_0/\sqrt{Ca_m}},
    \label{eqn:71}\\
    \text{where,}~~\mathcal{F} = \frac{\bar{R}}{i\bar{Q}^{1/4}} = -(1+\alpha\frac{x_0}{h_0})^{3/4}\frac{i\eta[3+\beta^2(\eta^2-2)]}{\sqrt{2}\beta(1+\eta^2)^{5/2}}\left(\frac{1-i\eta}{1+i\eta}\right)^{1/8},
    \label{eqn:72}
\end{align}
and $\displaystyle \beta \equiv (1+\alpha_1)^{1/2} = \frac{\Lambda}{1-\Lambda}$. 
In the limit of small $Ca_m, \mathcal{C}(\Lambda, Ca_m, \alpha)$ can be evaluated by the method of steepest descent. We can expect that the behaviour of the integral in Eq. \eqref{eqn:71} will be determined by the behaviour of the integrand at the point of stationary phase.

From Eq. \eqref{eqn:70}, we see that $\eta =i$ is the point of stationary phase (as $S_0^\prime = i\bar{Q}^{1/2} = 0$ implies $\eta = \pm i $), while $\displaystyle \eta = \eta_b = \frac{i}{\beta}$ is a logarithmic branch point of the integral in \eqref{eqn:71} (since $\displaystyle 1 + \beta^2\eta^2 = 0 \Rightarrow \eta = \pm \frac{i}{\beta} $). Hence, there are two qualitatively different situations (as can be seen in Fig. \ref{fig:5}$(a)$ and Fig. \ref{fig:5}$(b)$). 

\begin{figure}[h]
    \centering
    (a) \hspace{2.6 in} (b) \\ 
    \includegraphics[width=\textwidth]{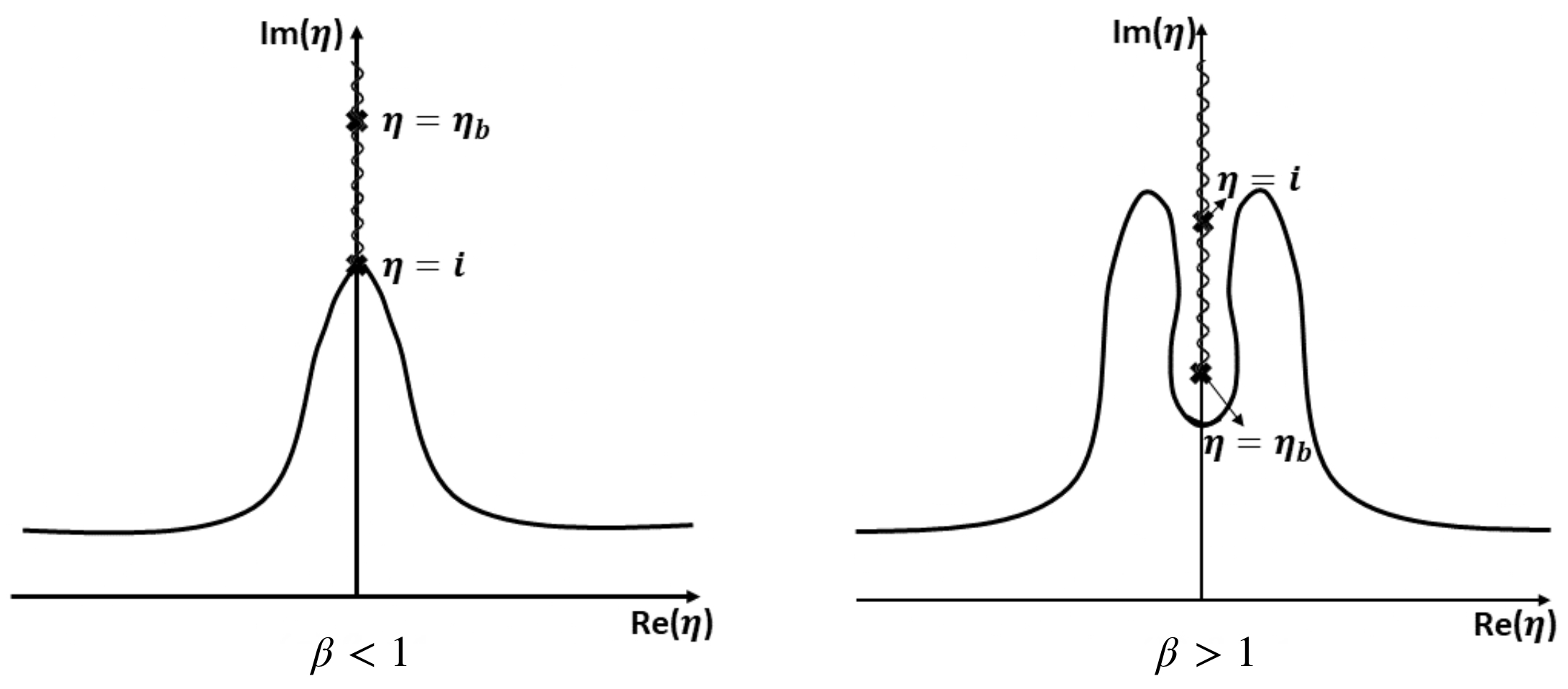}
    \caption{Contours of integration for the evaluation of the cusp function, $\mathcal{C}$ (Definition \eqref{eqn:d1})}
    \label{fig:5}
\end{figure}
\vspace{0.5cm}

\subsubsection{Case-I ($\beta<1$ or $\Lambda < 1/2$)}
Here, the branch point $\displaystyle \eta_b = \frac{i}{\beta}$ is out of the contour of integration for evaluating $\mathcal{C}$. Thus, we can integrate by expanding $S_0(\eta)$ around the point of the stationary phase $\eta =i$. Defining $\omega$ through $\eta= i + \omega$, we write $S_0$ as 
\begin{align}
    S_0 = \frac{2i\beta^2}{(1+\alpha \frac{x_0}{h_0})^{3/2}}\left[ \underbrace{\int_{0}^{i} \frac{(1+i\eta')^{3/4} (1-i\eta')^{1/4}}{(1+\beta^2\eta'^2)} d\eta'}_{(A_1)} + \underbrace{\int_{i}^{i+\omega} \frac{(1+i\eta')^{3/4} (1-i\eta')^{1/4}}{(1+\beta^2\eta'^2)} d\eta'}_{(B_1)}\right].
    \label{eqn:73}
\end{align}
For the term $(A_1)$ in Eq. \eqref{eqn:73} we make the change of variable $u = -i\eta^\prime$ and label it as $E(\Lambda, \alpha)$,  
\begin{align}
    E(\Lambda, \alpha)= -\frac{2\beta^2}{(1+\alpha\frac{x_0}{h_0})^{3/2}} \int_{0}^{1} \frac{(1-u)^{3/4}(1+u)^{1/4}}{1-\beta^2u^2}du, 
    \label{eqn:74}
\end{align}
which is independent of $\omega$. 
For the term $(B_1)$ in Eq. \eqref{eqn:73}, we approximate for small $\omega$, so we can write $S_0$ as 
\begin{align}
    S_0 \simeq E(\Lambda, \alpha) + \frac{2i\beta^2}{(1+\alpha\frac{x_0}{h_0})^{3/2}} \int_{0}^{\omega} \frac{(i\omega')^{3/4}(2-i\omega')^{1/4}}{1-\beta^2}d\omega' \nonumber \\
    \simeq E(\Lambda, \alpha) + \frac{2^{5/4}i^{7/4}}{(1+\alpha\frac{x_0}{h_0})^{3/2}}\frac{\beta^2}{1-\beta^2}\int_{0}^{\omega} (\omega')^{3/4}d\omega'.
    \label{eqn:75}
\end{align}
Therefore, close to the point of the stationary phase, Eq. \eqref{eqn:75} becomes
\begin{align}
    S_0 \simeq E(\Lambda, \alpha) + \frac{8}{7}\frac{2^{1/4}i^{7/4}}{(1+\alpha\frac{x_0}{h_0})^{3/2}}\frac{\Lambda^2}{1-2\Lambda}\omega^{7/4}.
    \label{eqn:76}
\end{align}
For consistency, we also expand $\mathcal{F}$ in the neighbourhood of $\eta = i$ (that is $\eta = i + \omega, |\omega| \ll 1$).
\begin{align}
    F(\omega) \simeq -(1+\alpha\frac{x_0}{h_0})^{3/4} \frac{3}{8}\frac{2^{1/8}}{i^{5/8}}\frac{1-2\Lambda}{\Lambda(1-\Lambda)}\frac{1}{\omega^{21/8}}.
    \label{eqn:77}
\end{align}
In this case, the cusp function is then
\begin{gather*}
    \mathcal{C} = b \int_{-\infty}^{\infty} \omega^{-21/8} e^{\frac{E(\Lambda,\alpha)}{\sqrt{Ca_m}}} e^{\frac{a\omega^{7/4}}{\sqrt{Ca_m}}} d\omega, 
\end{gather*}
which can be expressed in terms of a Gamma function as 
\begin{gather}
     \mathcal{C} = \frac{8\pi i}{7}b \left(\frac{a}{\sqrt{Ca_m}}\right)^{13/14} \frac{1}{\Gamma(\frac{27}{14})}e^{\frac{E(\Lambda,\alpha)}{\sqrt{Ca_m}}},
\end{gather}
where $\displaystyle a = \frac{8}{7}\frac{2^{1/4}i^{7/4}}{(1 + \alpha\frac{x_0}{h_0})^{3/2}}\frac{\Lambda^2}{1 - 2\Lambda}, b = -(1 + \alpha\frac{x_0}{h_0})^{3/4}\frac{3}{8} \frac{2^{1/8}}{i^{5/8}}\frac{1 - 2\Lambda}{\Lambda(1 - \Lambda)}$. Thus, we obtain 
\begin{align}
    \mathcal{C} = \frac{N}{(1+\alpha\frac{x_0}{h_0})^{9/14}} \frac{(1-2\Lambda)^{1/14}\Lambda^{6/7}}{(1-\Lambda)}\frac{1}{Ca_m^{13/28}}e^{\frac{E(\Lambda,\alpha)}{\sqrt{Ca_m}}},~~~~~~~~\text{for}~\Lambda <1/2,
    \label{eqn:78}
\end{align}
where
\begin{align}
    N = 3\pi \frac{2^{22/7}}{7^{27/14}\Gamma(\frac{27}{14})} = 2.008 .
    \label{eqn:79}
\end{align}
Eq. \eqref{eqn:78} illustrates that for a finite $Ca_m$, the cusp function is exponentially small and cannot vanish. The Cusp function, $\mathcal{C}$, has no regular series expansion in powers of $Ca_m$ and $\alpha$, that is, it implies the absence of solutions at non-zero $Ca_m$ for $\Lambda < 1/2$.

For $\alpha = 0$, we recover one of the main results of Hong and Langer \cite{hong1986analytic, hong1987pattern} for the viscous fingering problem. 

\subsubsection{Case-II ($\beta > 1$ or $\Lambda > 1/2)$: Selection}
For $\Lambda > 1/2$, the integral in the exponential term of $\Theta_0$ has a pole, and the contour of integration for $\mathcal{C}$ must go around the branch point $\eta =\bar{\eta}$. The stationary point $\bar{\eta} = i$ must be interpreted as a complex conjugate pair of points on the other side of a branch cut running from $\eta_b = i/\beta$ to $+i\infty$. Accordingly, as shown in the figure \ref{fig:5}(b), the path of steepest descent must include a section which goes from $\eta = i - \delta (\delta \rightarrow 0)$, around the branch point at $i/\beta$ and back up to $i+\delta$. Defining $\eta = i + \omega$ and $\delta$ such that $\delta \rightarrow 0$, we can exprerss $S_0$ as
\begin{gather}
    S_0 = \frac{2i\beta^2}{(1+\alpha \frac{x_0}{h_0})^{3/2}}\left[\lim_{\delta \rightarrow 0} \underbrace{\int_{0}^{i-\delta} \frac{(1+i\eta')^{3/4} (1-i\eta')^{1/4}}{(1+\beta^2\eta'^2)} d\eta'}_{(I_1)} + \lim_{\delta \rightarrow 0}\underbrace{\int_{i-\delta}^{i+\delta} \frac{(1+i\eta')^{3/4} (1-i\eta')^{1/4}}{(1+\beta^2\eta'^2)} d\eta'}_{(I_2)}\right. \nonumber \\+\left. \lim_{\delta \rightarrow 0} \underbrace{\int_{i+\delta}^{i+\omega} \frac{(1+i\eta')^{3/4} (1-i\eta')^{1/4}}{(1+\beta^2\eta'^2)} d\eta'}_{(I_3)}\right].
    \label{eqn:80}
\end{gather}
The term $(I_1)$ in Eq. \eqref{eqn:80} can be expressed as
\begin{gather*}
    \lim_{\delta \rightarrow 0}\frac{2i\beta^2}{(1+\alpha \frac{x_0}{h_0})^{3/2}} \int_{0}^{i-\delta} \frac{(1+i\eta')^{3/4} (1-i\eta')^{1/4}}{(1+\beta^2\eta'^2)} d\eta'\\= \frac{2i\beta^2}{(1+\alpha \frac{x_0}{h_0})^{3/2}}\int_{0}^{i} \frac{(1+i\eta')^{3/4} (1-i\eta')^{1/4}}{(1+\beta^2\eta'^2)} d\eta'.
\end{gather*}
Applying a change of variable $u = -i\eta'$, we have shown that this term is independent of $\omega$, and we denote 
\begin{align}
    E(\Lambda,\alpha) = -\frac{2\beta^2}{(1+\alpha \frac{x_0}{h_0})^{3/2}}\int_{0}^{1} \frac{(1-u)^{3/4} (1+u)^{1/4}}{(1-\beta^2u^2)} du.
    \label{eqn:81}
\end{align}
The term $(I_2)$ in Eq. \eqref{eqn:80} contains a simple pole at $\eta_b = i/\beta$; therefore, Cauchy Residue theorem yields,
\begin{align}
    I_2 =  \lim_{\delta \rightarrow 0}\frac{2i\beta^2}{(1+\alpha \frac{x_0}{h_0})^{3/2}}\int_{i-\delta}^{i+\delta} \frac{(1+i\eta')^{3/4} (1-i\eta')^{1/4}}{(1+\beta^2\eta'^2)} d\eta' \nonumber\\= \frac{2\pi i}{(1+\alpha \frac{x_0}{h_0})^{3/2}} \frac{(2\Lambda -1)^{3/4}}{(1-\Lambda)}.
    \label{eqn:82}
\end{align}

Now, the term $(I_3)$ in Eq. \eqref{eqn:80} becomes
\begin{align}
    I_3 = \lim_{\delta \rightarrow 0}\frac{2i\beta^2}{(1+\alpha \frac{x_0}{h_0})^{3/2}}\int_{i+\delta}^{i+\omega} \frac{(1+i\eta')^{3/4} (1-i\eta')^{1/4}}{(1+\beta^2\eta'^2)} d\eta' \nonumber\\= \frac{2i\beta^2}{(1+\alpha \frac{x_0}{h_0})^{3/2}}\int_{i}^{i+\omega} \frac{(1+i\eta')^{3/4} (1-i\eta')^{1/4}}{(1+\beta^2\eta'^2)} d\eta' \nonumber\\ = \frac{8}{7} \frac{2^{1/4}i^{7/4}}{(1+\alpha \frac{x_0}{h_0})^{3/2}}\frac{\Lambda^2}{1-2\Lambda} \omega^{7/4}.
    \label{eqn:83}
\end{align}
Therefore, close to the point of the stationary phase
\begin{align}
    S_0(\eta) \simeq E(\Lambda,\alpha) + \frac{2\pi i}{(1+\alpha \frac{x_0}{h_0})^{3/2}} \frac{(2\Lambda -1)^{3/4}}{(1-\Lambda)} + \frac{8}{7} \frac{2^{1/4}i^{7/4}}{(1+\alpha \frac{x_0}{h_0})^{3/2}}\frac{\Lambda^2}{1-2\Lambda}\omega^{7/4}.
    \label{eqn:84}
\end{align}
The cusp function then becomes
\begin{gather}
    \mathcal{C} = \int_{-\infty}^{\infty} d\omega \mathcal{F}(\omega) e^{\frac{S_0}{\sqrt{Ca_m}}} = \int_{-\infty}^{\infty} b e^{\frac{E(\Lambda,\alpha)}{\sqrt{Ca_m}}} e^{\frac{2\pi i }{(1 + \alpha \frac{x_0}{h_0})^{3/2}} \frac{(2\Lambda -1)^{3/4}}{(1-\Lambda)}} e^{\frac{a\omega^{7/4}}{\sqrt{Ca_m}}} \omega^{-21/8} d\omega \nonumber \\
    = \frac{8\pi i}{7} b \left(\frac{a}{\sqrt{Ca_m}} \right)^{13/14} \frac{e^{E(\Lambda,\alpha)/\sqrt{Ca_m}}}{\Gamma(\frac{27}{14})} \cos \left[ \frac{2\pi}{(1 + \alpha \frac{x_0}{h_0})^{3/2}} \frac{(2\Lambda -1)^{3/4}}{(1-\Lambda)\sqrt{Ca_m}} \right]. 
\end{gather}
For $\Lambda > 1/2$, we obtain 
\begin{align}
    \mathcal{C} = \frac{N}{(1+\alpha \frac{x_0}{h_0})^{9/14}} \frac{(1-2\Lambda)^{1/14}\Lambda^{6/7}}{(1-\Lambda)}\frac{e^{E(\Lambda,\alpha)/\sqrt{Ca_m}}}{Ca_m^{13/28}} \cos \left[ \frac{2\pi}{(1+\alpha \frac{x_0}{h_0})^{3/2}} \frac{(2\Lambda -1)^{3/4}}{(1-\Lambda)\sqrt{Ca_m}} \right]. 
    \label{eqn:85}
\end{align}
The solvability condition can now be satisfied at each zero of the cosine term, that is,
\begin{align*}
    \frac{2\pi}{\left( 1 + \alpha \frac{x_0}{h_0} \right)^{3/2}} \frac{(2\Lambda -1)^{3/4}}{(1-\Lambda)\sqrt{Ca_m}} = \frac{(2n-1) \pi}{2}, \qquad n \geq 1.
\end{align*}
Among these, only the solution associated with the first zero ($n = 1$) that corresponds to the thinnest finger is linearly stable \cite{corvera1995steady} that yields
\begin{align}
    \Lambda - \frac{1}{2} \simeq \frac{1}{2^{5}} \left( 1 + \alpha \frac{x_0}{h_0} \right)^{2} Ca_m^{2/3} \qquad \text{for} \;\; \Lambda > 1/2. 
    \label{eqn:86}
\end{align}
The form of $\Lambda$ for $\Lambda > 1/2 $ will be
\begin{align}
    \Lambda \simeq \frac{1}{2} + \frac{1}{2^{5}} \left( 1 + 2\alpha \frac{x_0}{h_0} \right) Ca_m^{2/3} \qquad \mbox{as} \quad Ca_m \rightarrow 0. 
    \label{eqn:87}
\end{align}

\end{proof}

It is noteworthy that an alternative form of the finger width 
\begin{align}
    \Lambda \simeq \frac{1}{2} + \frac{1}{2^{5}} \left[ \frac{Ca_m}{(1 - 3 \alpha x_0/h_0)} \right]^{2/3} \qquad \mbox{as} \quad Ca_m \rightarrow 0,
    \label{eqn:alternative_rep}
\end{align}
can be obtained by following the same approach discussed in \S \ref{sec:Solvability} when the term $\left[1/\left(1 + \alpha \frac{x_0}{h_0}\right)\right]^3$ appearing in the expression of $Q$ and $H$ of \eqref{eqn:39} are approximated in the first order of $\alpha$ and $\mathbf{O}(\alpha^2)$ are ignored as $\rvert \alpha \lvert \ll 1$. In the expression of \eqref{eqn:alternative_rep}, we have 
\begin{align}
    \left[\frac{1}{1-3\alpha x_0/h_0}\right]^{2/3} = \left(1-3\alpha x_0/h_0\right)^{-2/3} = 1 + 2\alpha x_0/h_0 + \mathbf{O}(\alpha^2).
\end{align}
Ignoring $\mathbf{O}(\alpha^2)$ terms, one recovers \eqref{eqn:87}, establishing that the two alternative representations are identical, correct up to $\mathbf{O}(\alpha)$. 

\section{Summary and Discussions}\label{sec:discussion}


\begin{figure}[!htbp]
    \centering
    \includegraphics[width=0.5\textwidth]{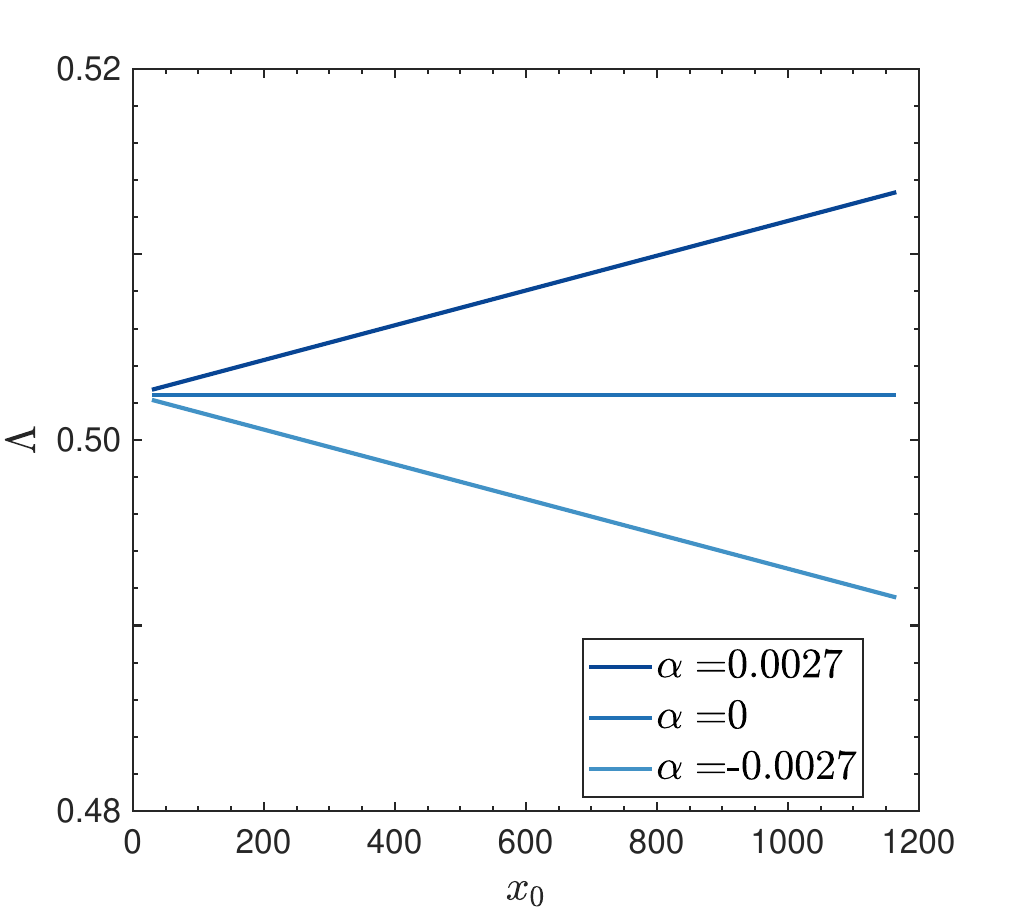}
    \caption{Variation of $\Lambda$ with $x_0$ corresponding to $1/B = 100$. Using Eq. \eqref{eq:B}, we compute $U$ and vary time $t$ to obtain different values of $x_0 = U t$. This shows that for positive gradient ($\alpha = 0.0027$), $\Lambda$ increases, for zero gradient ($\alpha = 0$), it is approximately $1/2$, and for negative gradient ($\alpha = -0.0027$), it decreases.}
    \label{fig:analytic}
\end{figure}

In this paper, we provide an analytical treatment of the Saffman-Taylor fingers in a non-standard Hele-Shaw cell for a class of Newtonian fluids. To find the main result, we have used the solvability condition for the Saffman-Taylor problem in the limit of small modified Capillary number ($Ca_m \rightarrow 0$) and the gap-gradient ($\lvert \alpha \rvert \ll 1$). For $\lvert \alpha \rvert \ll 1$, the physically selected finger is predicted to have a width $\Lambda$ such that $\Lambda - 1/2$ is proportional to 
$\displaystyle \left( 1 + 2\alpha \frac{x_0}{h_0} \right) Ca_m^{2/3}$, 
as $Ca_m \rightarrow 0$. 

In \S \ref{sec:Solvability}, a solvability theory is utilised to derive a necessary condition for pattern selection by performing a singular perturbation expansion around the family of zero-surface-tension solutions. This leads to an inhomogeneous integro-differential equation \eqref{eqn:24}-\eqref{eqn:25} governing the finger shape, in which the operator acts on a function representing the interface orientation angle. The solvability condition requires the inhomogeneous term to be orthogonal to the null space of the adjoint operator, which is enforced through the construction of a cusp function. The zeros of the cusp function determine the admissible finger solutions. As a result, the continuum of solutions is reduced to a discrete set, each corresponding to a finger width that satisfies a power-law relation with the modified capillary parameter. 
Among them, only the solution associated with the first zero of the cosine—corresponding to the thinnest finger—is linearly stable. This is physically reasonable, since wider fingers with smaller tip curvature are prone to the same instabilities that destabilize the flat interface. Thus, solvability theory provides a consistent and physically meaningful mechanism for pattern selection, ensuring a unique finger width for given experimental parameters. From Eq. \eqref{eqn:87}, it is evident that the selected finger width, $\Lambda$, depends not only on the modified capillary number ($Ca_m$) but also on the depth gradient $\alpha$. 



\begin{figure}[!htbp]
     \centering
        (a) \hspace{2.5 in} (b) \\
        \includegraphics[width=0.49\textwidth]{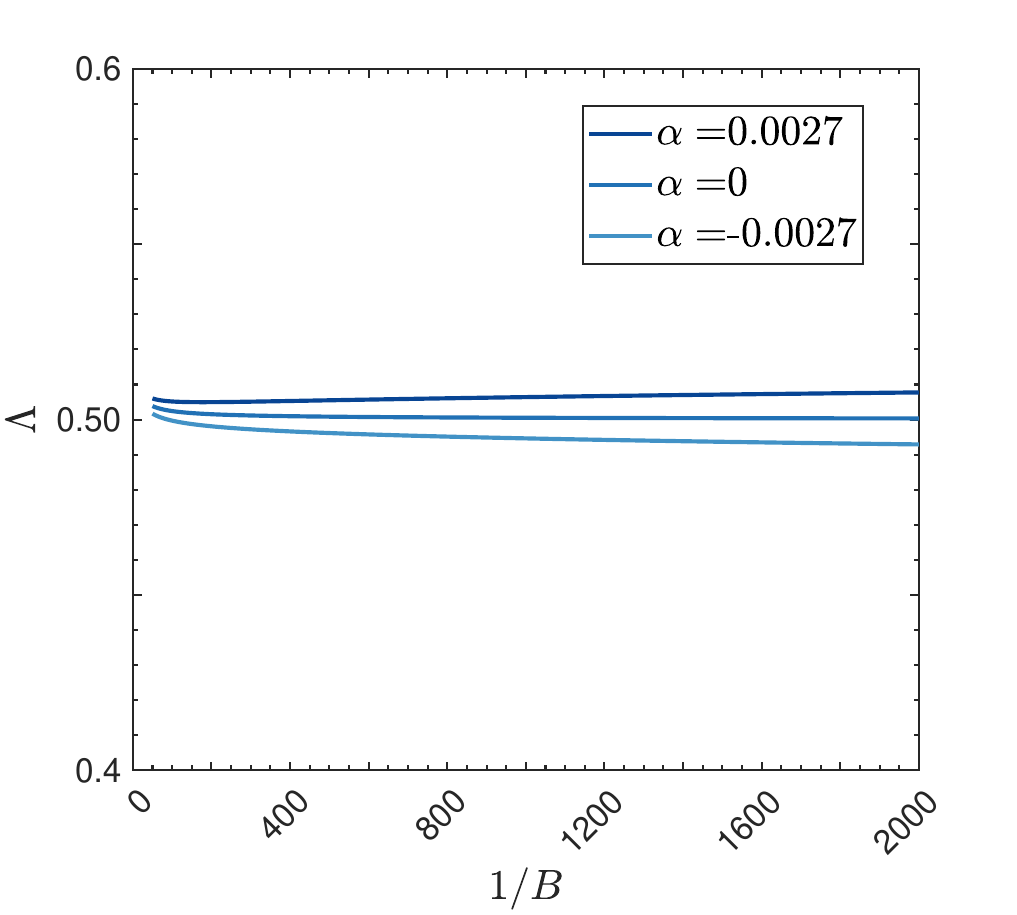}
        \includegraphics[width=0.49\textwidth]{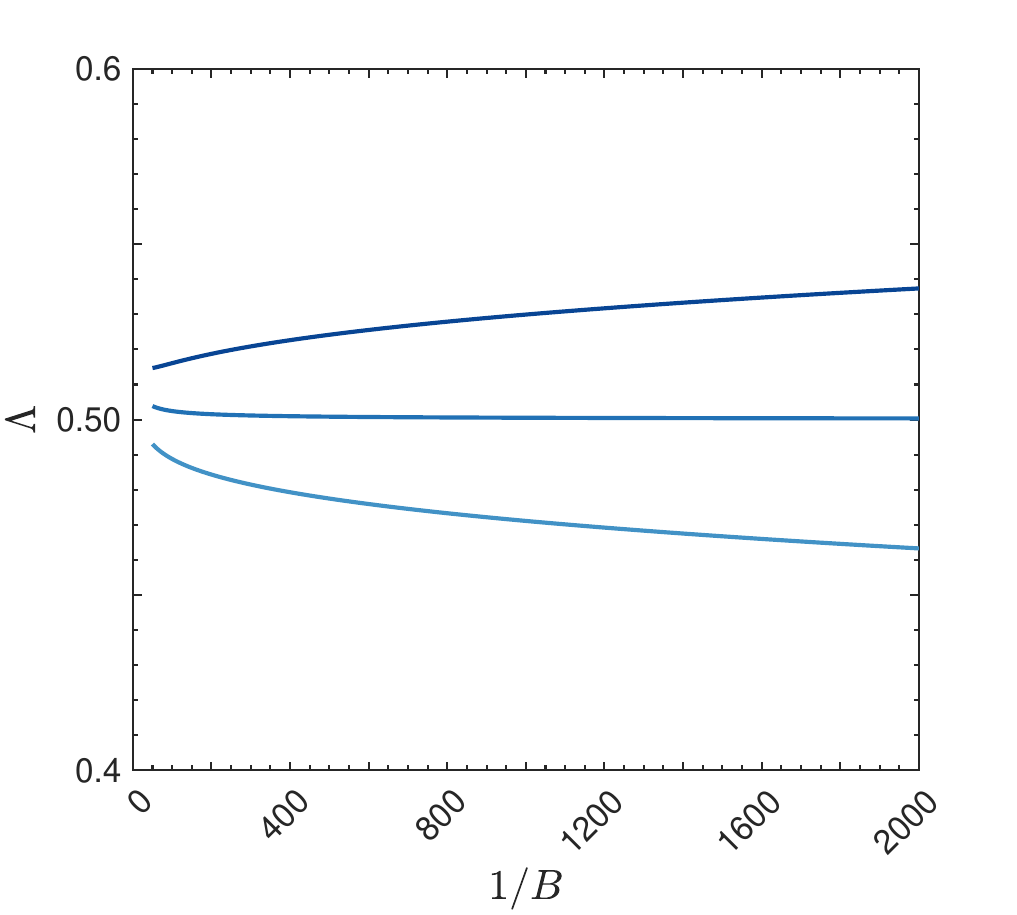} \\ 
        (c) \\ 
        \includegraphics[width=0.49\textwidth]{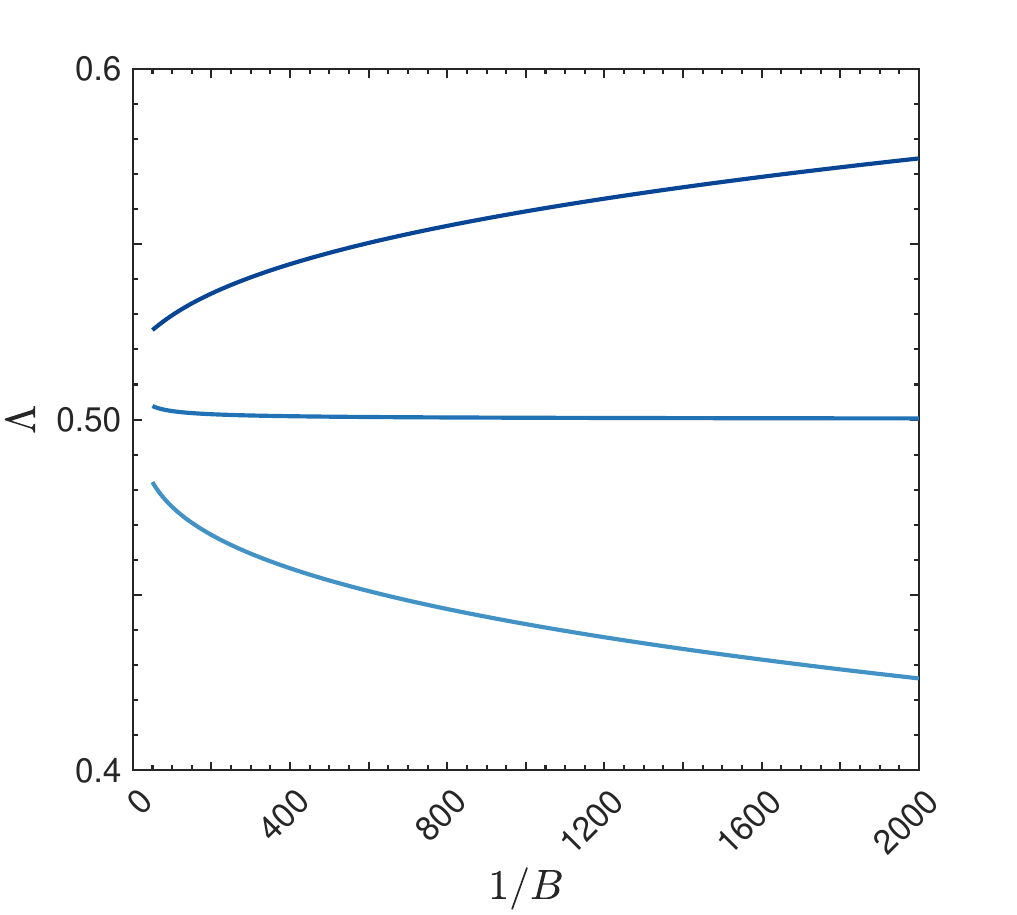}
        \caption{Dependence of $\Lambda$ on the surface tension parameter $1/B$ at different time (a) $t = 10$, (b) $t = 50$, and (c) $t = 100$. For a unique $1/B$, we obtain a unique $U$ that determines the location of the fingertip. This clearly indicates that for a given value of $1/B$, as the finger propagates along the cell, its width increases/decreases as the cell diverges/converges; whereas, the width remains almost half of the cell width for a uniform HS cell, irrespective of the tip position.}
        \label{fig:7}
\end{figure}

A qualitative comparison between the theoretical estimates of the present study and related experimental studies \cite{zhao1992perturbing, al2012control} and linear stability analysis is discussed below. Using normal mode analysis, one can show that the width of the most unstable mode satisfies \cite{al2013controlling}, 
\begin{align}
    \Lambda = \frac{2\pi}{\sqrt{3B}} \left[ 1 - \alpha \left( \frac{w}{h_0} \right)^2B \right], \label{eqn:88}
\end{align}
where 
\begin{equation}
    \label{eq:B}
    1/B = (12\mu U/\nu) [w/h_0]^2
\end{equation} 
is the surface tension parameter \cite{corvera1995anisotropic} (also called the control parameter \cite{lindner2000viscous}). Eq. \eqref{eqn:88} from the linear stability theory indicates that for the leading order term, we do not observe a significant effect of slope from the linear theory. However, a close look into the relation indicates the selected finger width $(\Lambda)$ increases/decreases for a converging/diverging Hele-Shaw cell. Our theoretical findings are in qualitative agreement with these observations (see figure \ref{fig:analytic}). In the present study, we have considered the following experimental parameters from \cite{al2012control}. The width of the Hele-Shaw cell is $W = 5.1$ cm, depth gradients $\alpha = -0.0027, 0$, $+0.0027$, and $h_0 \approx 1.4$ mm. 

Eq. \eqref{eq:B} indicates that when the geometric parameters of the Hele-Shaw cell are fixed, a unique $1/B$ corresponds to a unique $U$ that further determines the location of the fingertip ($x_0 = U t$). From figures \ref{fig:analytic} and \ref{fig:7}, we observe that for a given value of $1/B$, as the finger propagates along the cell, its width increases/decreases as the cell diverges/converges; whereas, the width remains almost half of the cell width for a uniform HS cell, irrespective of the tip position. This result for $\alpha = 0$ is consistent with the analytical data of Hong and Langer \cite{hong1986analytic} and the linear stability results of Homsy \cite{homsy1987viscous}. 
Our analytical results highlight how the small depth gradient modifies the selected finger width to explain a flatter and unstable tip for $\alpha > 0$ and a sharper and more stable tip for $\alpha < 0$ \cite{zhao1992perturbing, al2013controlling}. 
These qualitative behaviours captured through our theoretical calculation are in excellent agreement with the experimental results. 
For further quantitative comparison with the experiments, a full solution/numerical solution of the Eqs. \eqref{eqn:24}-\eqref{eqn:25} is essential, and the same will be considered in our future work. 






\appendix

\section{Mathematical re-examination of solvability theory}\label{sec:re-exam}

The Saffman-Taylor viscous fingering problem presents significant numerical challenges due to its inherently nonlinear and nonlocal nature. These complexities have led to notable discrepancies between theoretical predictions and numerical results. Even when using simplified approaches, such as the geometrical model, which neglects nonlocality, there has been persistent disagreement regarding the prefactors of the cusp function. This highlights a clear need to carefully re-examine and systematically improve the mathematical approximations made in the original proposals of solvability theory.

The purpose of this appendix is to undertake such a re-examination. We will focus on a particular part of the mathematical framework of solvability theory within the context of Hele-Shaw flow, where the complete nonlocal model must be solved. Our primary goal is to recompute the prefactor of the cusp function. In contrast to the original approach \cite{hong1986analytic}, which only included the leading term for simplicity, our calculation will incorporate all nonlinear terms at the level of the WKB approximation \eqref{eqn:wkb}. By doing so, we aim to provide a more accurate and robust framework that better aligns theoretical models with numerical observations. To include the whole series of the WKB expansion \eqref{eqn:wkb}, we propose a solution for the null eigenvector of ${\mathcal{L}}^\dagger$, of the form
\begin{align}
    \Theta_0 = e^{S_0/ \sqrt{Ca_m}}\sum_{n=0}^{\infty} g_n {Ca^{\frac{n}{2}}_m}
    \label{eqn:A1}
\end{align}
This is equivalent to \eqref{eqn:wkb}. Substituting $\Theta_0$ into \eqref{eqn:64} we obtain
\begin{gather}
    (S_0'^2 + \bar{Q})g_0 +\left[ (S_0'^2+\bar{Q})g_1 +(2S_0'g_0'+S_0''g_0) \right] Ca_m^{1/2} \nonumber\\ ~~~~~~~~~~~~~~~~~~~~~~~~~~~~~~~~~+\sum_{m=0}^{\infty}\left[ g_m'' + (S_0'^2+\bar{Q})g_{m+2} + (2S_0'g_{m+1}'+S_0''g_{m+1}) \right]Ca_m^{\frac{m}{2}+1}=0
    \label{A2}
\end{gather}
Equating terms with the same power in $Ca_m$, we have
\begin{enumerate}[label=(\alph*)]
    \item $(S_0'^2 + \bar{Q})g_0 =0$
    \label{item:a}
    \item $(S_0'^2+\bar{Q})g_1 +(2S_0'g_0'+S_0''g_0) =0$
    \label{item:b}
    \item $ g_m'' + (S_0'^2+\bar{Q})g_{m+2} + (2S_0'g_{m+1}'+S_0''g_{m+1})=0$
    \label{item:c}
\end{enumerate}
which is a set of equations for $S_0, g_0, g_1,~.~.~.~,g_n$. From \ref{item:a}, we have $S_0'^2 + \bar{Q}=0$, so
\begin{align}
    S_0 = i \int_{0}^{\eta} \bar{Q}^{1/2} d\eta
    \label{eqn:A3}
\end{align}
and
\begin{enumerate}[label=(\alph*), start=2]
    \item $g_0' +\frac{S_0''}{2S_0'}g_0 =0$
    \label{item:b'}
    \item $g_{m+1}'+\frac{S_0''}{2S_0'}g_{m+1}= -\frac{1}{2S_0'}g_m''$.
    \label{item:c'}
\end{enumerate}
Since $S_0$ has a point of stationary phase at $\bar{\eta}=i$, we calculate our equations in the immediate neighbourhood of $\bar{\eta}$. Let $\eta=i+\omega$, then
\begin{gather*}
    \bar{Q} \simeq b_1 \omega^{3/2}, ~\text{where}~ b_1 \equiv \frac{2^{5/2}i^{3/2}}{(1+\alpha \frac{x_0}{h_0})^3}\left( \frac{\Lambda^2}{1-2\Lambda} \right)^2 ,\\
    \frac{S_0''}{2S_0'} = \frac{1}{4}\frac{\bar{Q}'}{\bar{Q}} \simeq \frac{3}{8} \frac{1}{\omega}
\end{gather*}
and from $S_0'^2 + \bar{Q}=0$ will give $2S_0' = 2ib_1^{1/2}\omega^{3/4}$.

The equation around $\bar{\eta}$ become
\begin{enumerate}[label=(\alph*), start=2]
    \item $g_0' +\frac{3}{8} \frac{1}{\omega}g_0 =0$
    \label{item:b''}
    \item $g_{m+1}'+\frac{3}{8} \frac{1}{\omega}g_{m+1}= -\frac{1}{2ib_1^{1/2}\omega^{3/4}}g_m''$.
    \label{item:c''}
\end{enumerate}
From \ref{item:b''} we obtain 
\begin{align}
    g_0 = a_0 \omega^{-3/8},
    \label{eqn:A4}
\end{align}
and \ref{item:c''} gives a recursive relation for the rest of the $g_n$'s. It is actually a first order equation of the form $g' + Fg = G$ whose solution is $g = g_h \int \frac{G}{g_h} d\omega$ where $g_h = e^{-\int F d\omega}$. The solution for $g_{m+1}$ is then given in terms of $g_m$ by 
\begin{align}
    g_{m+1} = \omega^{-A}B \int \omega^{A-\frac{3}{4}}g_m'' d\omega
    \label{eqn:A5}
\end{align}
with $A= 3/8$ and $B=-1/2ib_1^{1/2}$. It is clear that if $g_m$ has the form $g_m= a_m \omega^{-A_m}, g_{m+1}$ will have the same form. And since $g_0$ has the form $g_0 = a_0 \omega^{-A_0}$, with $A_0 = 3/8$. All we need to do is to find a recursive relation among the coefficients. When the form $g_m = a_m \omega^{-A_m}$ is put into \eqref{eqn:A5} we find
\begin{align}
    g_{m+1} = \frac{BA_m(A_m+1)a_m}{A-A_m-\frac{7}{4}}\omega^{-(A_m+\frac{7}{4})} = a_{m+1}\omega^{-A_{m+1}}.
    \label{eqn:A6}
\end{align}
Comparison of the exponents of $\omega$ determines $A_m$
\begin{align}
    A_{m} = \frac{7}{4}m + A_0
    \label{eqn:A7}
\end{align}
and comparison of the multiplicative coefficients determines $a_m$
\begin{align}
    a_{m} = \left(-\frac{7}{4} \right)^m B^m \frac{\Gamma(m+\frac{3}{14})\Gamma(m+\frac{11}{14})}{\Gamma(\frac{3}{14})\Gamma(\frac{11}{14})\Gamma(m+1)}a_0 .
    \label{eqn:A8}
\end{align}
For consistency, we need to expand $e^{\frac{S_0}{\sqrt{Ca_m}}}$ and $\bar{R}$ (expression of $\bar{R}$ in Eqn. \eqref{eqn:55}) around the point of stationary phase $\bar\eta$. The subsequent analysis depends on the choice of $\Lambda > 1/2$ and $\Lambda < 1/2$, which are discussed in the following subsections. 


\subsection{$\Lambda > 1/2$}

Using equations \eqref{eqn:84} and \eqref{eqn:A3}, $S_0$ can be approximated as
\begin{align}
    S_0(\eta) \simeq E(\Lambda,\alpha) + \frac{2\pi i}{(1+\alpha \frac{x_0}{h_0})^{3/2}} \frac{(2\Lambda -1)^{3/4}}{(1-\Lambda)} + \frac{8}{7} \frac{2^{1/4}i^{7/4}}{(1+\alpha \frac{x_0}{h_0})^{3/2}}\frac{\Lambda^2}{1-2\Lambda}\omega^{7/4}.
    \label{eqn:A9}
\end{align}
where $\omega = \eta -i$ and $E(\Lambda, \alpha)$ is given by \eqref{eqn:81}. We can write
\begin{align*}
    e^{\frac{S_0}{\sqrt{Ca_m}}} = e^{\frac{E(\Lambda,\alpha)}{\sqrt{Ca_m}}}e^{\frac{a\omega^{7/4}}{\sqrt{Ca_m}}} \cos \left[ \frac{2\pi}{(1 + \alpha \frac{x_0}{h_0})^{3/2}} \frac{(2\Lambda -1)^{3/4}}{(1-\Lambda)\sqrt{Ca_m}} \right] ,\\ a \equiv  \frac{8}{7}\frac{2^{1/4}i^{7/4}}{(1+\alpha \frac{x_0}{h_0})^{3/2}}\frac{\Lambda^2}{1-2\Lambda} .
\end{align*}
With the above results \eqref{eqn:A1} becomes
\begin{align}
    \Theta_0 =  e^{\frac{E(\Lambda,\alpha)}{\sqrt{Ca_m}}} \cos \left[ \frac{2\pi}{(1 + \alpha \frac{x_0}{h_0})^{3/2}} \frac{(2\Lambda -1)^{3/4}}{(1-\Lambda)\sqrt{Ca_m}} \right] \sum_{m=0}^{\infty} a_m Ca_m^{\frac{m}{2}} \omega^{-\frac{7}{4}m-\frac{3}{8}} e^{\frac{a\omega^{7/4}}{\sqrt{Ca_m}}}
    \label{eqn:A10}
\end{align}
and $\bar{R}$, given by equation \eqref{eqn:55}, becomes
\begin{align*}
    \bar{R} = b_2 \omega^{-9/4}, ~~\text{where}\\
    b_2 \equiv \frac{3i^{-5/4}}{2^{9/4}}\frac{(1-2\Lambda)^{1/2}}{(1-\Lambda)}.
\end{align*}
We can now write the cusp function as 
\begin{align}
    \mathcal{C} = \int_{-\infty}^{\infty} \Im \Theta_0 \bar{R}(\omega) d\omega = \frac{1}{i} b_2 e^{\frac{E(\Lambda,\alpha)}{\sqrt{Ca_m}}} \cos \left[ \frac{2\pi}{(1 + \alpha \frac{x_0}{h_0})^{3/2}} \frac{(2\Lambda -1)^{3/4}}{(1-\Lambda)\sqrt{Ca_m}} \right] \sum_{m=0}^{\infty} a_m Ca_m ^{\frac{m}{2}}I_m , 
    \label{eqn:A11}
\end{align}
where $I_m$ is defined as 
\begin{align*}
    I_m \equiv \int_{-\infty}^{\infty} \omega^{-\frac{7}{4}(m+\frac{3}{2})}e^{\frac{a\omega^{7/4}}{\sqrt{Ca_m}}} ,
\end{align*}
which can be written in terms of a gamma function as 
\begin{align*}
    I_m = \frac{8\pi i}{7} \left( \frac{a}{\sqrt{Ca_m}} \right)^{m+\frac{13}{14}} \frac{1}{\Gamma(m+\frac{27}{14})}.
\end{align*}
The cusp function can then be written as
\begin{align}
\label{eqn:cusp}
    \mathcal{C} = \left[ \frac{8\pi}{7}a_0 b_2 a^{\frac{13}{14}} \frac{1}{Ca_m^{13/28}} \right] e^{\frac{E(\Lambda,\alpha)}{\sqrt{Ca_m}}} \cos \left[ \frac{2\pi}{(1 + \alpha \frac{x_0}{h_0})^{3/2}} \frac{(2\Lambda -1)^{3/4}}{(1-\Lambda)\sqrt{Ca_m}} \right] \Delta ,
\end{align}
where
\begin{align}
    \Delta = \frac{1}{a_0} \sum_{m=0}^{\infty} a^m a_m \frac{1}{\Gamma(m+\frac{27}{14})} = \frac{1}{\Gamma(\frac{3}{14})\Gamma(\frac{11}{14})} \sum_{m=0}^{\infty} \left[\frac{1}{2}\right]^m \frac{\Gamma(m+\frac{3}{14})\Gamma(m+\frac{11}{14})}{\Gamma(m+1)\Gamma(m+\frac{27}{14})}.
    \label{eqn:A12}
\end{align}
Equation \eqref{eqn:A8} has been used to write the second equality. There is an overall constant which is irrelevant since what is uniquely determined is the normalized cusp function. To compare with the original calculation, we set $a_0$ to agree with the result of Hong and Langer \cite{hong1986analytic} when only the term $m=0$ in \eqref{eqn:A12} is considered. We have
\begin{align*}
    \mathcal{C} = \frac{N}{(1+\alpha\frac{x_0}{h_0})^{9/14}} \frac{(1-2\Lambda)^{1/14}\Lambda^{6/7}}{(1-\Lambda)}\frac{1}{Ca_m^{13/28}}e^{\frac{E(\Lambda,\alpha)}{\sqrt{Ca_m}}} \cos \left[ \frac{2\pi}{(1 + \alpha \frac{x_0}{h_0})^{3/2}} \frac{(2\Lambda -1)^{3/4}}{(1-\Lambda)\sqrt{Ca_m}} \right] \Delta \Gamma\left(\frac{27}{14}\right),
\end{align*}
where
\begin{align*}
    N = 3\pi \frac{2^{22/7}}{7^{27/14}\Gamma(\frac{27}{14})} = 2.008 .
\end{align*}
Hence, the quantity $\Delta \Gamma\left(\frac{27}{14}\right)$ provides a multiplicative factor that indicates how distinct this result is from the result where only two terms in Eqn. \eqref{eqn:wkb} are kept. So, when setting $m = 0, \Delta \Gamma\left(\frac{27}{14}\right)=1$ and the result of Eqn. \eqref{eqn:87} is recovered. 


\subsection{$\Lambda < 1/2$}

Similar to the case of $\Lambda > 1/2$, using equations \eqref{eqn:76} and \eqref{eqn:A3}, $S_0$ can be approximated as
\begin{align}
    S_0 \simeq E(\Lambda, \alpha) + \frac{8}{7}\frac{2^{1/4}i^{7/4}}{(1+\alpha \frac{x_0}{h_0})^{3/2}}\frac{\Lambda^2}{1-2\Lambda}\omega^{7/4}. 
    \label{eqn:A14}
\end{align}
Note that the expression for $S_0$ in \eqref{eqn:A9} corresponding to $\Lambda > 1/2$ has an additional term 
\[ \frac{2\pi i}{(1+\alpha \frac{x_0}{h_0})^{3/2}} \frac{(2\Lambda -1)^{3/4}}{(1-\Lambda)} \]
that subsequently appears in the cusp function (see equation \eqref{eqn:cusp}) as the argument of a cosine function. In the case of $\Lambda < 1/2$, the corresponding cosine term in the cusp function becomes $1$. Therefore, the cusp function reads 
\begin{align}
\label{eq:cusp2}
    \mathcal{C} = \frac{N}{(1+\alpha\frac{x_0}{h_0})^{9/14}} \frac{(1-2\Lambda)^{1/14}\Lambda^{6/7}}{(1-\Lambda)}\frac{1}{Ca_m^{13/28}}e^{\frac{E(\Lambda,\alpha)}{\sqrt{Ca_m}}} \Delta \Gamma\left(\frac{27}{14}\right),
\end{align}
where
\begin{align*}
    N = 3\pi \frac{2^{22/7}}{7^{27/14}\Gamma(\frac{27}{14})} = 2.008 .
\end{align*}
Similar to the case of $\Lambda > 1/2$, the quantity $\Delta \Gamma\left(\frac{27}{14}\right)$ gives a multiplicative factor, which tells us how much this result differs as compared to the result with only two terms in Eqn. \eqref{eqn:wkb}. Therefore, setting $m = 0$, one obtains $\Delta \Gamma\left(\frac{27}{14}\right) = 1$ and Eqn. \eqref{eqn:78} is recovered. 

Therefore, for both $\Lambda > 1/2$ and $\Lambda < 1/2$, we note, $N' = M\Delta \Gamma\left(\frac{27}{14}\right)$. The factor $\Delta$ can be computed numerically to obtain $\Delta \Gamma\left(\frac{27}{14}\right) = 1.054$
\cite{corvera1995anisotropic}. Thus, including all terms in the WKB expansion leads to approximately $5\%$ change in the pre-factor.

\section*{Acknowledgement} D.G. acknowledges the financial support from IPDF, IIT Guwahati, through project grant no. MATHSPNIITG01371xxDG001. S.P. acknowledges financial support through the MATRICS Grant (MTR/2022/000493) from the Science and Engineering Research Board, Department of Science and Technology, Government of India.

\section*{Conflict of interest statement} All the authors certify that they do not have any conflict of interest.


\bibliographystyle{unsrt} 
\bibliography{sample.bib}

\end{document}